\newcommand{\norm}[1]{\left\lVert#1\right\rVert}
\preto\subequations{\ifhmode\unskip\fi}
\newcommand*{\rom}[1]{\expandafter\@slowromancap\romannumeral #1@}
\newtheorem{thm}{Theorem}
\newtheorem{lem}[thm]{Lemma}
\newtheorem{remark}{Remark}
\begin{document}
\bstctlcite{IEEEexample:BSTcontrol}
\title{Joint Network Lifetime Maximization and Relay Selection Design in Underwater Acoustic Sensor Networks}

\author{Z. Mohammadi, M. Soleimanpour-Moghadam, S. Talebi, H. Ahmadi
\thanks{Z. Mohammadi is 
with the Department of Electrical Engineering, Shahid Bahonar University of Kerman, Kerman, Iran (e-mail: zahramohammadi@eng.uk.ac.ir). 
M. Soleimanpour-moghadam is with the Texas Biomedical Research Institute, Texas, US (e-mail: 
msoleimanpour@txbiomed.org).
S. Talebi is with the Department of Electrical Engineering, Shahid Bahonar University of Kerman, Kerman, Iran and the Advanced Communications Research Institute, Sharif University, Tehran, Iran (e-mail: siamak.talebi@uk.ac.ir). H. Ahmadi is with the Department of Electrical Engineering, University of York, York, UK (e-mail: Hamed.ahmadi@york.ac.uk). \\Corresponding author: Mohadeseh Soleimanpour-Moghadam}
}

\maketitle
\begin{abstract}
The paper proposes a new approach to minimize the number of relays while maximizing the lifetime of underwater acoustic sensor networks (UASNs). This involves formulating the relay node placement (RNP) problem as a multi-objective optimization problem and employing the multi-objective lexicographic method (MOLM) to solve it. To achieve the optimal solution, the MOLM consists of two steps. First, the problem of lifetime maximization is tackled to find RNP solutions. This transforms the RNP into a non-convex optimization problem which is then converted into a convex programming equivalent. The proposed method has the same computational complexity as previous relay-node adjustment (RA) and difference convex algorithm (DCA) methods. The second step introduces a novel relay node selection to reach the optimal number of relays. Simulation results demonstrate that it has superior network lifetime and efficiency compared to RA and DCA.
\end{abstract}

\begin{IEEEkeywords}
Underwater sensor node, relay node, critical node, network lifetime, convex optimization, energy hole.
\end{IEEEkeywords}

\IEEEpeerreviewmaketitle
\section{Introduction}
\IEEEPARstart{U}nderwater acoustic sensor networks (UASNs) have attracted a great deal of attention for various applications including off-shore oil and gas extraction, oil spills, and natural calamities like tsunami and hurricane forecasts \cite{2020survey}.
These networks are composed of multiple nodes that use acoustic transceivers, which are more suitable for underwater environments than electromagnetic ones as acoustic signals experience less attenuation in the water. Each node collects data from its surroundings and transmits it to a surface buoy (SB). Unfortunately, because of the sparse deployment of UASNs, energy consumption is high, particularly for the nodes closest to the SB that send a large amount of data. This can lead to an {\it energy hole}, where the nodes closest to the SB die before the other nodes, making it impossible to forward the remaining data to the SB \cite{wadaa2005training}. To prolong the lifetime of the network and avoid the energy hole, much research has been conducted to reduce energy consumption in recent years.\par
\subsection{Literature review}

The work in \cite{wang2015energy} studied the cluster-based data forwarding to deal with the energy efficacy in UASNs. Based on this approach, the sensor nodes are classified into some clusters, and the cluster heads (CHs) are responsible for gathering data. Also, the residual energy and location of sensor nodes are taken into account to select the optimal CHs to prevent the energy hole problem.  
Using an autonomous underwater vehicle (AUV) to collect data from sensor nodes is an another solution to reduce energy consumption in UASNs. For instance, \cite{zhuo2020auv}, \cite{yan2018energy} demonstrate that by employing an AUV, data can be collected from gateways and gateways can be rotated over time to balance energy consumption. However, the transmission delays caused by the AUV are very long, making it difficult to use in time-sensitive applications such as temperature and salinity evaluations for red tide forecasts \cite{javaid2015efficient}. 
Power control schemes can also be used to adjust the power level of sensors during communication depending on the channel status and network conditions \cite{power_control, power_control1}. Additionally, sleeping schemes can be implemented for sensor nodes \cite{mac1, mac2} to help save energy by keeping them in the sleeping mode for as long as possible. However, these techniques pose new challenges when dealing with new technologies such as energy harvesting. To this end, \cite{EH1, EH2} present methods for optimizing time to foster energy harvesting and prolong the lifetime of the network. But energy harvesting-based networks struggle with the unpredictability of the harvested energy. There has been research on designing an optimal routing protocol. For example, routing protocols can balance energy consumption among nodes by assigning more energy to nodes that have higher traffic loads \cite{ahmed2016optimized}, using residual energy as the routing criteria \cite{wadud2019energy}, and splitting the transmission range into different power levels \cite{felemban2015underwater}. 
\par

The high costs and intricate characteristics of underwater networks often result in their sparse deployment, leading to increased energy consumption. To overcome this limitation and extend the network's lifetime, some authors propose the utilization of micro-relay nodes. These relay nodes possess identical size and power supply as the sensor nodes. One area of focus to enhance the network's lifetime is the strategic placement of these relay nodes, which is known as relay node placement (RNP). This aspect has garnered significant attention among researchers.
For example, the study by Das et al. \cite{das2017enhancement} proposed the utilization of relay nodes that dynamically move between communicating nodes, acting as intermediaries. This strategy aims to decrease the communication distance between nodes, thereby improving overall energy efficiency.
The authors in \cite{liu2017optimal} proposed a heuristic relay-node adjustment (RA) scheme for positioning relay nodes in 3-dimensional UASN. This scheme consists of two steps: initially randomizing the relay nodes on the water surface and then adjusting their depth.
However, the heuristic approach employed in \cite{liu2017optimal} may yield suboptimal solutions due to its reliance on simplified rules and assumptions. In contrast, our mathematical solution, as presented in \cite{mohammadi2018new, mohammadi2020increasing}, effectively overcomes these limitations, resulting in improved accuracy and efficiency.
Specifically, \cite{mohammadi2018new} introduced the concept of a line-segment relay node placement (LSRNP), which entailed positioning a relay node between a critical node and its farthest neighbor. However, it was demonstrated that LSRNP posed challenges as a complex, non-convex problem, making practical implementation difficult. Building upon this concept, \cite{mohammadi2020increasing} explored joint deployment of relay and sensor nodes, employing a difference convex algorithm (DCA) to derive a low-complexity solution.
In contrast to previous relay node placement schemes, which often employed simplified restrictions on the feasible space of relay nodes (i.e., a line between critical node and its farthest neighbor), our approach seeks to overcome this limitation by seeking the optimal position for relay nodes, ensuring the best possible arrangement. 
\par
\begin{table*}[!t]
\centering
\caption{Comparisons between related studies on energy management in UASNs}
\label{tab:summary_table}

\begin{tabular}{p{1.5cm}p{5cm} p{5cm} p{5cm}}
\hline
{Ref.} &{Methods Adoption} & {Problem Description} & {Performance Metrics}\\
\hline
\cite{wang2015energy} &Cluster-based data gathering & CH selection&Network lifetime\\
\cite{zhuo2020auv}, \cite{yan2018energy} & AUV-based data gathering & Path optimization of AUV& Collection delay\\
\cite{power_control, power_control1} &Adjusting the power level of sensors & Power allocation & Achievable throughput \\
\cite{mac1, mac2}& Changing the network topology&Link scheduling optimization & Energy consumption \\
\cite{EH1,EH2}& Energy harvesting& Optimizing time to harvest energy& Network lifetime\\
\cite{wadud2019energy}&Designing routing protocol& Optimizing the routing criteria& Balanced energy consumption\\
\cite{mohammadi2018new, mohammadi2020increasing} &Line-segment RNP & Optimizing the position of relay nodes & Network lifetime\\
{This work} & Multi-objective RNP& Network lifetime maximization and the number of relay nodes minimization& { Network lifetime and number of relay nodes}\\
\hline
\end{tabular}

\end{table*}
\subsection{Contribution}
The above mentioned RNP solutions contribute to extending the network lifetime but still fail to optimize the network lifetime. In this paper we introduce a framework that addresses the inherent shortcomings of the LSRNP. By exploring a broader range of potential relay node positions and leveraging advanced optimization techniques, we aim to significantly improve the accuracy and optimality of RNP.
Additionally, the number of relay nodes deployed in the RNP is an important factor to consider which is not considered in all previous works. In which the research in \cite{mohammadi2022modified} has investigated the effect of the number of relay nodes on the LSRNP and found that redundant relay nodes do not increase the network lifetime, but too many can lead to a decrease in network performance due to the extra time and energy spent in packet forwarding and receiving data. 
Motivated by these, we propose a framework that can solve the RNP efficiently without placing redundant relay nodes in the network. 
To the authors's best knowledge, while RNP is becoming a crucial topic in UASNs, this is one of the first times that network lifetime maximization is combined with number of relay nodes minimization. 
Overall, the paper's core contributions are described in the following.
\begin{enumerate}
\item We formulate the joint problem of maximizing network lifetime and minimizing the number of relay nodes (NLMA-RNMI) in relay-assisted UASNs as a multi-objective optimization problem (MOOP). We employ the multi-objective lexicographic method (MOLM) to achieve Pareto optimality for the given NLMA-RNMI MOOP. Based on the MOLM, the objectives are prioritized in order of importance, with network lifetime given the highest priority. Thus, our first aim is to maximize the network lifetime and subsequently minimize the number of relay nodes to prevent resource wastage.
\item To ensure that the network lifetime is maximized, we propose the Optimal Relay Node Setting (ORNS) algorithm, which models RNP as a mathematical optimization problem. This problem considers two criteria in the maximization process: balancing the energy consumption between sensor nodes and preventing the relay nodes positioned in outlier positions.
We have taken this into account in our problem formulation, leading to a MOOP RNP. To perform the proposed MOOP RNP, we use the $\epsilon$-constraint approach. By employing this approach, we prioritize the network lifetime of the critical node as the objective function to address the energy hole in the network. Additionally, we incorporate the relay node's lifetime as a constraint to ensure an improvement in the overall network lifetime. We demonstrate that the resulting RNP is non-convex and introduce a transformation to convert it into a convex optimization problem.
\item We then formulate a mixed-integer convex programming model to obtain the optimal number of relay nodes. To ensure a strong coupling relationship between network lifetime and the number of relay nodes, we consider the optimal value of network lifetime as a constraint in the problem of minimizing relay nodes. This ensures direct communication without the assistance of intermediate relay nodes when the energy consumption is low. Therefore, this approach is referred to as the relay selection design.
\end{enumerate}
To better highlight the contribution of this paper, Table \ref{tab:summary_table} presents a comparison of this work with different works in the literature.
We evaluate the time complexity of our RNP method, as well as the DCA and RA approaches. The results indicate that these approaches exhibit the same order of complexity. Additionally, we conduct several comprehensive simulations assess their performance. Through these simulations, we compare the effectiveness of our RNP approach with the others. The outcomes demonstrate that our approach outperforms the alternative methods in terms of network lifetime.

\par
The rest of the paper is organized as follows: In section \rom{2}, the model of the system and problem definition are illustrated and described. 
Some basics and preliminaries on the MOOP, MOLM, convex optimization problems, difference convex functions, and other basics \textemdash that are used 
in this work\textemdash are described in section \rom{3}. 
The proposed method to RNP and its equivalent convex programming scheme as well as the relay node selection scheme are presented in detail in section \rom{4}. In section \rom{5}, the complexity of the proposed RNP is evaluated.
Simulation and comparison results are provided in section \rom{6} and finally, the paper concludes in section \rom{7}, 
recapping its contribution.\\
Lightface letters denotes scalars. Boldface lowercase letters are employed to denote vectors and boldface uppercase letters to denote matrices. The operations $\rm{ E (.)}, (.)^T$ and $\|.\|_p$ denote the expectation operator, transpose, and p-norm respectively. 
\textcolor{black}{The $[{\mathbf A}]_{n*}$ and ${\mathbf a} [n]$ stands for $n$-th row and $n$-th column of matrix $\mathbf A$ respectively,} and to denote the $(m,n)$ entry of matrix $\mathbf A$ we use $[{\mathbf A}]_{mn}$. 
The $\mathbf{1}$ refers to a vector with all elements equal to one while $\mathbf{e}_i$ denotes a unit vector with element $i$ equals one and zeros everywhere else. 
We also denote the size, i.e., cardinality, of set $S$ by $|S|$. 
$\rm {dom}$ is an abbreviation of domain where $\rm {dom} (g)$ describes all the values that go into the function $g$.
\section{System model and problem formulation}{\label{sec2}}
Consider a 3-dimensional multi-hop UASN, as depicted in Fig \ref{Fig1}, which consists of $\rm N$ sensor nodes and $\rm M$ relay nodes deployed in a designated search field to gather data about the environment. The sensor nodes are constrained by limited battery power and their positions are randomly determined. The relay nodes, on the other hand, are strategically placed to maximize the network lifetime, but are not capable of sensing information from the environment. Additionally, the SB is situated on the ocean surface to communicate with a satellite that forwards the data gathered by the sensor nodes to the onshore sink. The SB is responsible for making RNP decisions in the field and all the sensor nodes are required to communicate with it in order to forward parameters. This scheme is effective in maximizing the network lifetime, while minimizing the number of relay nodes.
\begin{figure}[!b]
\centering{\includegraphics[scale=.26]{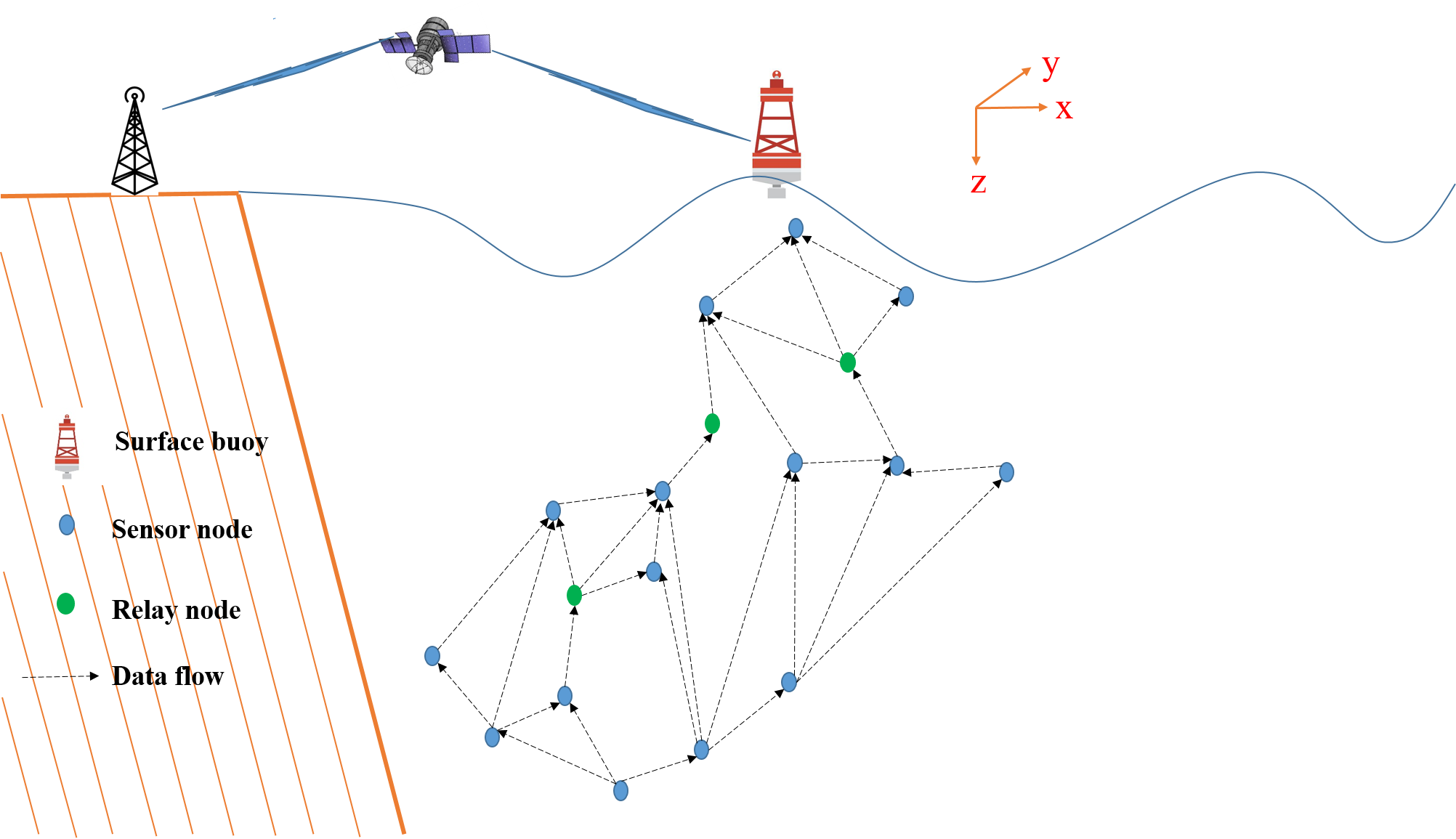}}
\captionsetup{justification=centering}
\caption{Network model with 3-dimensional Cartesian coordinates\label{Fig1}}
\captionsetup{justification=centering}
\label{Fig1}
\end{figure}
It should be noted that the origin of the Cartesian coordinates system is located on SB. In the proposed scheme, all nodes have the communication range $C_R$, so any two nodes out of this range will not be able to communicate together. 
The rate array $\mathbf{R}$ is given as 
\begin{equation}
\mathbf{R}=
\left[\mathbf{R}_{1*}^T, \ldots, \mathbf{R}_{i*}^T, \ldots, \mathbf{R}_{|\mathcal N|*}^T\right]^T,
\end{equation}
where $\mathcal N$ is the set of all nodes, including sensor nodes ($ \mathcal S$), relay nodes ($ \mathcal R$), and the SB, and $\mathbf{R}_{i*} \in \mathbb{Z}_+^{|\mathcal N|\times 1}$ is the outgoing flow vector from node $i$ to other nodes. Additionally, $\mathbf{r}\left[i\right] \in \mathbb{Z}_+^{|\mathcal N|\times 1}$ is the incoming flow vector from other nodes to node $n$ for $i=1,\ldots,|\mathcal N|$.
In order to ensure that the flow rate array meets certain specifications, there are several conditions that must be taken into account. 
\begin{enumerate}
\item 
At each sensor node $k$, the sum of outgoing flow rates must be equal to the sum of incoming flow rates and the generation rate, given as 
\begin{equation}
\mathbf{R}_{k*} \mathbf{1}=\mathbf{1}^T \mathbf{r}\left[k\right]+g_k,
\end{equation}
where $g_k$ is the generation rate of sensor node $k$. Additionally, at each relay node, the sum of outgoing flow rates should be equal to the sum of incoming flow rates, 
\begin{equation}\mathbf{R}_{i*} \mathbf{1}=\mathbf{1}^T \mathbf{r}\left[i\right].\end{equation}
\item Underwater acoustic communication faces challenges such as limited bandwidth, long propagation delays, multipath fading, and high signal attenuation. These factors affect the achievable data rate and overall link capacity in underwater acoustic sensor networks.
In the context of UASNs, the node's link capacity ($L_c$) refers to the maximum data rate or throughput that can be achieved over a communication link between two nodes in the network \cite{akyildiz2005underwater}.
Therefore, the sum of outgoing flow rates of each node must be less than the node's link capacity ($L_c$), 
\begin{equation}
\mathbf{R}_{n*} \mathbf{1} \leq L_c,\end{equation}
for each node $n\in \mathcal R \cup S$. 
\end{enumerate}
\subsection{Energy consumption model}

Based on the Urick model \cite{urick1975}, the energy consumption of sending one bit of data from one node, $i$, to another node, $j$ is expressed as \cite{wang2016energy}, \cite{liu2017optimal}
\begin{equation}\label{powerconsumption}
p_{ij}=
\begin{cases}
p_s +\epsilon_{f s}{d^2_{ij}} & d_{ij}<d _t \\
p_s +\epsilon_{mp}{d^4_{ij}} & d_{ij} \geq d _t
\end{cases},
\end{equation}
where $d_{ij}$ is the 3-dimensional Euclidean distance between nodes $i$ and $j$ given by $\norm {\mathbf l_i-\mathbf l_j}$, with $\mathbf l_i$ being the position vector of node $i$. Additionaly, $d_t$ is a threshold distance to transmit data; $p_s$ 
is the the power consumption for processing in sending data; $\epsilon_{fs}$ and $\epsilon_{mp}$ represent
transmit amplifier coefficient of free space and multipath
model, respectively. If $d_{ij}<d_t$, the amplifier coefficient of free space model $\epsilon_{f s}$
is adopted. Otherwise, the amplifier coefficient of free multipath model $\epsilon_{mp}$ is adopted.
In UASNs, no matter the free space module or multipath model,
the amplifier coefficient is defined as $\alpha (f)^{d_{ij}}$ \cite{wang2016energy}, \cite{liu2017optimal}, where $\alpha (f)$ is
the absorption coefficient which is derived from Throp's formula \cite{thorp1967analytic} as
$10{\rm log}\alpha (f)=0.1\frac{f^2}{1+f^2}+\frac{40f^2}{4100+f^2}+2.75\times10^{-4}f^2+0.003$ 
for frequencies of the acoustic signal above a few hundred Hertz and $f$ is the frequency of acoustic signal. 
Therefore, the expression (\ref{powerconsumption}) is rewritten as:
\begin{equation}
p_{ij}=
\begin{cases}
p_s +\alpha (f)^{d_{ij}}{d^2_{ij}} & d_{ij} <d _t \\
p_s +\alpha (f)^{d_{ij}}{d^4_{ij}} & d_{ij} \geq d _t
\end{cases},
\end{equation}

%
%

The lifetime of an underwater acoustic sensor network (UASN) is defined as the duration until the death of the first node, as effective communication is only possible until then \cite{cao2015balance}, \cite{akbar2016efficient}. The lifetime of a node, $i$, is expressed as the ratio of its residual energy, $\epsilon_i$, to its total energy consumption \cite{liu2017optimal}:
\begin{equation}\label{N_def}
\tau_i=\frac{\epsilon_i}{\sum_{j\in \mathcal N}^{i\neq j} p_{ij} [\mathbf{R}]_{ij} + p_r \sum_{k\in \mathcal S \cup \mathcal R}^{k\neq i}[\mathbf{R}]_{ki}},
\end{equation}
where $p_r$ is the energy consumption for receiving one bit, and $[\mathbf{R}]_{ij}$ is the outgoing flow from node $i$ to node $j$, and $[\mathbf{R}]_{ki}$ is the incoming flow from node $k$ to node $i$.

The following assumptions are made in the investigation: (1) All sensor and relay nodes have the same communication range; (2) the movement of sensors is predictable, and the position of the nodes is known through the localization process; (3) Link capacity (${L}_c$) is considered constant and equal for all sensor and relay nodes \cite{liu2017optimal}; and (4) Before placing relay nodes, the network is connected, meaning each sensor node has a route to the SB.
\subsection{Problem formulation} 
Our goal is to maximize the UASN lifetime while jointly minimizing the number of required relay nodes. 
The multi-objective optimization problem is formulated as below:
\begin{subequations}
\begin{align}
&\{\min {\rm M}, \max\{\min_{i \in \mathcal S \cup \mathcal R} \tau_i\}\},\\\intertext{s.t.}
&\tau_i=\frac{\epsilon_i}{\sum_{j\in \mathcal N}^{i\neq j} p_{ij} [\mathbf{R}]_{ij} + p_r \sum_{k\in \mathcal S \cup \mathcal R}^{k\neq i}[\mathbf{R}]_{ki}}, i \in \mathcal S \cup \mathcal R \label{eq14b}\\
&|\mathcal R| = {\rm M}\label{eq14c}\\
&p_{ij}=
\begin{cases}
p_s +\alpha^{d_{ij}}{d^2_{ij}} & d_{ij} <d _t \\
p_s + \alpha^{d_{ij}}{d^4_{ij}} & d_{ij} \geq d _t
\end{cases}, 
\label{eq14d}\\
&d_{ij}^2=\norm{\mathbf l_i-\mathbf l_j}^2, \forall i,j\in \mathcal N,\label{eq14e}\\
& \mathbf l_{r_i} \in X_c, \forall r_i\in \mathcal R. \label{eq14f}
\end{align}
\end{subequations}
Constraint (\ref{eq14c}) sets the number of relay nodes to $\rm M$. Constraint (\ref{eq14d}) and (\ref{eq14e}) calculate the energy consumption between node $i$ and $j$ based on the 3-dimensional Euclidean distance between them. Lastly, constraint (\ref{eq14f}) requires that relay nodes must be positioned within the cylindrical area of the surveillance field with radius $R_s$ and depth $H_s$.
\section{Preliminaries}{\label{sec3}}
In this section, we introduce preliminaries that we will use in the rest of our study.
\newtheorem{defn}{{\bf Definition}}
\begin{defn}\label{def2}
We consider a general MOOP as
\begin{subequations}
\begin{align}
&\min \{f_1(\mathbf x),...,f_K(\mathbf x)\}\\\intertext{s.t}
&\mathbf x\in \mathcal X 
\end{align}
\end{subequations}
with $K$ objectives and a feasible set $\mathcal X$.
Using MOLM, the objectives are ranked in order of importance from best to worst. The problem then begins with the most important objective and continues with the objectives in the order of their importance. Specifically, in step $i$, $f_i^*$ is obtained by sequentially minimizing the objective $f_i$. It is worth noting that, the computed optimal value of each objective is added as a constraint for the subsequent optimization steps.
\end{defn}
\begin{defn}\label{def2}
In a convex optimization problem, we minimize a convex objective function over a convex set. This problem is of the form
\begin{subequations}
\begin{align}
&\min f_0(\mathbf{x})\\\intertext{s.t.}
&f_i(\mathbf{x})\leq 0, i=1, \ldots, m\\
&h_i(\mathbf{x})=0, i=1, \ldots, p
\end{align}
\end{subequations}
where $\mathbf{x}\in \mathbb{R}^n$ and $f_0, \dots, f_m: \mathbb{R}^n\rightarrow\mathbb{R}$ are convex functions and $h_i(\mathbf{x}): \mathbb{R}^n\rightarrow\mathbb{R}$ are affine functions \cite{boyd2004convex}. The important property of this problem is that any locally optimal solution is also globally optimal.
\end{defn}
\begin{defn}
Let $\Omega$ be a convex set in $\mathbb{R}^n$. We say that a function is a difference convex (DC) if it can be expressed as the difference of two convex functions on $\Omega$, i.e. if $f(x)=f_1(\mathbf{x})-f_2(\mathbf{x})$, where $f_1$ and
$f_2$ are convex functions on $\Omega$ \cite{tuy1998convex}. 
The function $f(x)$ 
is convex when $f_1(\mathbf{x})$ and $f_2(\mathbf{x})$ are convex and affine functions, respectively. 
In general, each convex function is a DC function, but its reverse is not true.
\end{defn}
\begin{lem}\label{lem1}
The inverse of the function $f: \mathbb{R}^n \rightarrow \mathbb{R}$ will reach its minimum point at $\mathbf{x}_0 \in \mathbb{R}^n$ if $\mathbf{x}_0$ is the maximum point of $f$. Additionally, this result also holds for the converse.
\end{lem}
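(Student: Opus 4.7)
The plan is to interpret ``inverse'' in the multiplicative sense, that is, to work with the function $g(\mathbf{x}) = 1/f(\mathbf{x})$, and to exploit the strict monotonicity of the scalar map $t \mapsto 1/t$ on a half-line of constant sign. In the context of this paper the function $f$ of interest is the node lifetime $\tau_i$ in \eqref{N_def}, which is strictly positive, so I will state and use the proof under the standing hypothesis that $f(\mathbf{x}) > 0$ on $\mathbb{R}^n$ (the case $f<0$ is entirely symmetric and the lemma is vacuous/meaningless wherever $f$ vanishes).

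First I would prove the forward direction. Suppose $\mathbf{x}_0$ is a maximizer of $f$, so $f(\mathbf{x}) \leq f(\mathbf{x}_0)$ for every $\mathbf{x} \in \mathbb{R}^n$. Since all quantities are positive, dividing both sides of this inequality by $f(\mathbf{x})\,f(\mathbf{x}_0) > 0$ preserves the inequality and yields
\begin{equation}
\frac{1}{f(\mathbf{x}_0)} \leq \frac{1}{f(\mathbf{x})}, \qquad \forall \mathbf{x} \in \mathbb{R}^n,
\end{equation}
which is precisely the statement that $\mathbf{x}_0$ is a minimizer of $g = 1/f$.

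For the converse, I would start from $g(\mathbf{x}_0) \leq g(\mathbf{x})$ for all $\mathbf{x}$, i.e.\ $1/f(\mathbf{x}_0) \leq 1/f(\mathbf{x})$, and again multiply through by the positive quantity $f(\mathbf{x})\,f(\mathbf{x}_0)$ to recover $f(\mathbf{x}) \leq f(\mathbf{x}_0)$, establishing that $\mathbf{x}_0$ maximizes $f$. Since both implications use nothing more than sign preservation under multiplication by a positive quantity, they can equivalently be packaged as the single observation that $t \mapsto 1/t$ is a strictly decreasing involution on $(0,\infty)$, so that it reverses the order on the range of $f$ and thus interchanges maxima with minima of the composition.

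The main (and really only) subtlety is the positivity/definedness assumption: without $f$ having constant, non-zero sign on the domain, $1/f$ either blows up or changes its monotone behavior across a zero of $f$, and the equivalence of extremizers breaks down. I would therefore make this hypothesis explicit at the start of the proof and note that it is automatically satisfied in our application, where $f$ plays the role of the strictly positive lifetime expression in \eqref{N_def}.
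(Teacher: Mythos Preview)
Your proof is correct and, in fact, more complete than what the paper offers: the paper does not give a self-contained argument for this lemma at all but simply refers the reader to Theorem~1.46 of \cite{silverman1989essential}. Your interpretation of ``inverse'' as the multiplicative reciprocal $g=1/f$ is precisely the one the paper intends, since the lemma is invoked to pass from maximizing the lifetime $\tau_{c_i}$ in \eqref{N_def} to minimizing the transmit power $p_{c_i r_i}$ appearing in its denominator. Your explicit handling of the positivity hypothesis is a useful addition, as the paper leaves this implicit.
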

\begin{proof}
See the theorem 1. 46 in \cite{silverman1989essential}.
\end{proof}
\section{Proposed method}
Using the lexicographic optimization algorithm, the lifetime of a network and the number of relay nodes are optimized in a hierarchical manner. This process involves two steps to arrive at the optimal solution.
\subsection{ Step 1 - Network lifetime maximization: ORNS scheme}
In this case, the problem can be defined as 
\begin{subequations}
\begin{align}
&\max\{\min_{i \in \mathcal S \cup \mathcal R} \tau_i\}\\\intertext{s.t.}
&\tau_i=\frac{\epsilon_i}{\sum_{j\in \mathcal N}^{i\neq j} p_{ij} [\mathbf{R}]_{ij} + p_r \sum_{k\in \mathcal S \cup \mathcal R}^{k\neq i}[\mathbf{R}]_{ki}}, i \in \mathcal S \cup \mathcal R, \\
& |\mathcal M| ={\rm{M_0}} \\ 
&p_{ij}=
\begin{cases}
p_s +\alpha^{d_{ij}}{d^2_{ij}} & d_{ij} <d _t \\
p_s + \alpha^{d_{ij}}{d^4_{ij}} & d_{ij} \geq d _t
\end{cases}, \\
&d_{ij}^2=\norm{\mathbf l_i-\mathbf l_j}^2, \forall i,j\in \mathcal N,\\
& \mathbf l_{r_i} \in X_c, \forall r_i\in \mathcal R. 
\end{align}
\end{subequations}
The problem at hand is a challenging, non-convex NP-hard problem, which we have attempted to address through the use of ORNS. This method is designed to determine the ideal location of a single relay node ($r_i$) and can be extended to identify the best coordinates of all the relay nodes. Furthermore, ORNS calculates which nodes ($N_ {c_i}^u$) receive data from the most critical node ($c_i$) of the network. This set can be expressed as: 
\begin{equation}
N_ {c_i}^u=\{j;[\mathbf{R}]_{{c_i}j} >0\},\label{eq22}
\end{equation}
where $j$ is the nodes that is connected to $c_i$. With the optimal coordinates of $r_i$ determined, it can then act as a router to facilitate the connection between $c_i$ and the elements of $N_ {c_i}^u$. 
In other words, after placing $r_i$ in the network the following equations can be written:
\begin{equation}
[\mathbf{R}]_{c_{i}r_{i}}=\sum_{j\in N_ {c_i}^u} [\mathbf{R}]_{c_{i}j},\label{eq23}
\end{equation}
\begin{equation}
[\mathbf{R}]_{r_{i}j}= [\mathbf{R}]_{c_{i}j}, \forall j \in N_ {c_i}^u.\label{eq24}
\end{equation}
According to the above equations, $r_i$ acts as an intermediate node to forward the information of node $c_i$ and thus:
\begin{equation}
[\mathbf{R}]_{c_{i}j}=0, \forall j \in N_ {c_i}^u.\label{eq25}
\end{equation}
From the network lifetime perspective, mathematically, two remarks must be made in regard to the RNP: 
\begin{remark}
In a valid RNP, less energy should be consumed than in a direct transmission, and the network lifetime is maximized when the $c_i$ lifetime is maximized, which can be expressed as
\begin{equation}
\mathbf{l}_{{r}_{{i}}}=\mathrm{argmax}\left( {\tau}_{{c}_{{i}}} \right).
\end{equation}
\end{remark}
\begin{remark}
By defining the residual energy factor of $c_{i}$
as $\rm {RF}_i \triangleq \frac{\epsilon_{c_i}}{\epsilon_{p}}, $
where $\epsilon_{p}$ is the primary energy
of $c_{i}$, the efficient RNP (ERNP) will take into account the
lifetime of relays especially when the $\rm {RF}_i $ is high, which can be expressed as:
\begin{equation}
\mathbf{l}_{{r}_{ {i}}}=\mathrm{argmax} \left( \tau_{{r}_{i}} \right)\ \ {s.t.\ }p_{r_{i}j} < p_{c_{i}j}
\end{equation}
\end{remark}
To further illustrate this concept, an example of an RNP in two system scenarios is given in Fig. \ref {Fig2} where the ERNP considers the case when the relay node is located within the convex hull of $c_i$ and its upper neighbors $\left( N_{c_{i}}^{u} \right)$, mathematically expressed as follows:
\begin{equation}
\mathbf{l}_{r_{i}} \in {\mathrm{conv}}\left\{ \mathbf{l}_{j},\ j \in N_{c_{i}}^{u}\cup c_i \right\} \triangleq \mathbf{\mathrm{conv}}_{{i}},
\label{eq28}
\end{equation}
where $\mathrm{conv}_{i}$ is a convex combination of these $\left| N_{c_{i}}^{u} \right| + 1$ vectors. This system is true if and only if there is a solution to the following system: 
\begin{equation}
\mathbf{l}_{r_{i}} =\Big \{ \sum_{ j \in N_{c_{i}}^{u}\cup c_i }^{}{\theta_{j}\mathbf{l}_{j}}| \theta_{j} \geq 0,\sum_{ j \in N_{c_{i}}^{u}\cup c_i }^{}\theta_{j} = 1\Big\}.\label{cvx_def1}
\end{equation}
\begin{figure}[!t]
\centering{\includegraphics[width=.9\columnwidth]{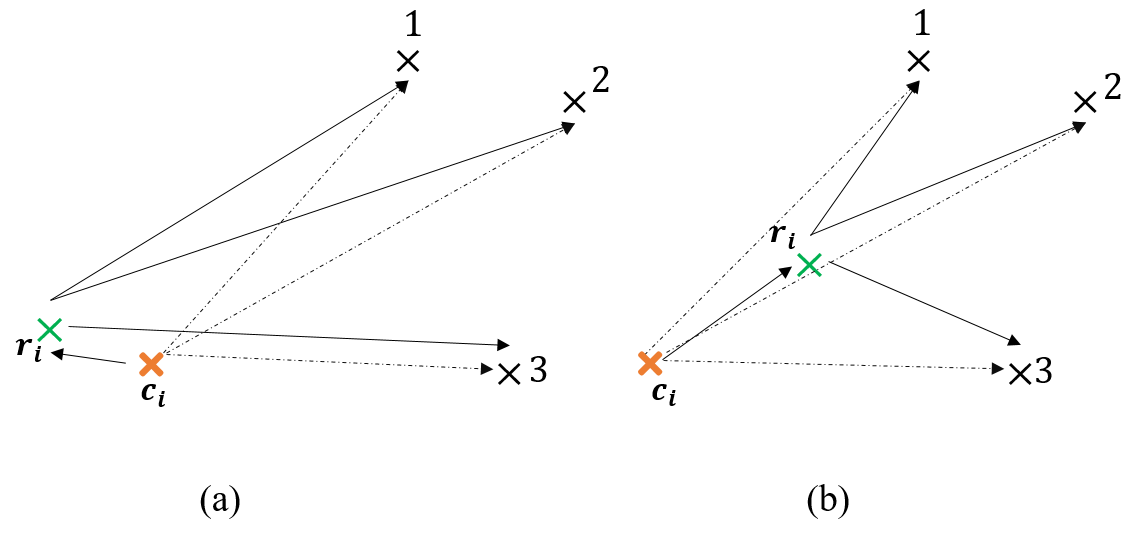}}
\caption{Routing traffic by the two RNPs when $|N_{c_{i}}^{u}|=3$, (a) $p_{r_ij}>p_{c_ij}$, (b) $p_{r_ij}<p_{c_ij}$ (ERNP)}
\label{Fig2}
\end{figure}
\begin{figure}[!t]
\centering{\includegraphics[width=.7\columnwidth]{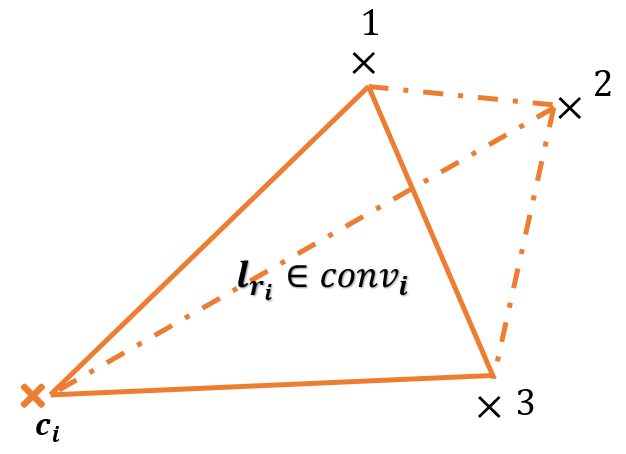}}
\caption{Construction of the feasible space to locate $r_i$ using convex hull when $|N_{c_{i}}^{u}|=3$}
\label{Fig3}
\end{figure}
{\color{black}{
The feasible space created by (\ref{cvx_def1}) for \(r_{i}\) is shown in Fig. \ref{Fig3}. \par
In summary, to meet all the above conditions, the problem of network lifetime maximization can be given as 
\begin{subequations}\label{p31}
\begin{align}
&\mathbf{y}_{i} = \arg\left( \max\tau_{c_{i}},\max\tau_{r_{i}} \right),\\
&s.t.\nonumber\\
& \tau_{c_{i}} = \frac{\epsilon_{c_{i}}}{p_{c_{i}r_{i}} [\mathbf{R}]_{{c_i}{r_i}} + p_{\text{r}}\sum_{ k \in \mathcal S \cup \mathcal R }^{} [\mathbf{R}]_{k{c_i}} }, \\
& \tau_{r_{i}} = \frac{\epsilon_{r_{i}}}{\sum_{j \in N_{c_{i}}^{u}}^{}{p_{r_{i}j} [\mathbf{R}]_{{r_i}j} } 
+ p_{{r}} [\mathbf{R}]_{{c_i}{r_i}}} ,\\
&p_{c_{i}r_{i}} =
\begin{cases}
p_{s} + \alpha\left( f \right)^{d_{c_{i}r_{i}}}d_{c_{i}r_{i}}^{2}\ ,d_{c_{i}r_{i}} < d_{t}\\
p_{s} + \alpha\left( f \right)^{d_{c_{i}r_{i}}}d_{c_{i}r_{i}}^{4},\ d_{c_{i}r_{i}} \geq d_{t} 
\end{cases},
\\
&d_{c_{i}r_{i}} ^2= \left\| \mathbf{l}_{r_{i}} - \mathbf{l}_{c_{i}} \right\|^{2},\\
&p_{r_{i}j} =
\begin{cases}
p_{s} + \alpha\left( f \right)^{d_{r_{i}j}}d_{r_{i}j}^{2}\ ,d_{r_{i}j} < d_{t} \\
p_{s} + \alpha\left( f \right)^{d_{r_{i}j}}d_{r_{i}j}^{4},\ d_{r_{i}j} \geq d_{t}
\end{cases},\forall j \in N_{c_{i}}^{u},\\
&d_{r_{i}j}^{2} = \left\| \mathbf{l}_{r_{i}} - \mathbf{l}_{j} \right\|^{2},\ \forall j \in N_{c_{i}}^{u},\\
&\mathbf{l}_{r_{i}} \in \mathbf{\rm{conv}}_{{i}},\\
&\left\{ p_{r_{i}j},d_{r_{i}j},p_{c_{i}r_{i}},d_{c_{i}r_{i}} \right\} \in \mathbb{R}_{+}^{\ },\ \forall j,
\end{align}
\end{subequations}
that $\mathbf{y}_{{i}}$ denotes the decision variable where 
$
\mathbf{y}_{{i}} \triangleq \left\lbrack \mathbf{p}_{r_i},\mathbf{d}_{r_i},p_{c_{i}r_{i}},d_{c_{i}r_{i}},\mathbf{l}_{r_{i}},\mathbf{\Theta} \right\rbrack \in \mathbb{R}^{3\left| N_{c_{i}}^{u} \right| + 6},
$
$
\mathbf{\Theta} = \left\lbrack \theta_{1}, \ldots ,\theta_{\left| N_{c_{i}}^{u} \right|+1} \right\rbrack^{T}, 
$
$
\mathbf{p}_{r_{i}} = \left\lbrack p_{r_{i}1}, \ldots, p_{r_{i}\left| N_{c_{i}}^{u} \right|} \right\rbrack^{T},
$
and 
$
\mathbf{d}_{r_{i}} = \left\lbrack d_{r_{i}1}, \ldots, d_{r_{i}\left| N_{c_{i}}^{u} \right|} \right\rbrack^{T}.
$

It is important to note that solely maximizing the lifetime of the critical node can solve the problem of energy hole among sensors and does not guarantee improving the entire network lifetime. Therefore, our proposed method takes into account the maximization of both relay nodes and critical nodes simultaneously. To achieve this, we employ multi-objective optimization techniques to identify optimal values by exploring trade-offs and finding desirable solutions across multiple objectives. 
In the following, we utilize the $\epsilon$-constraint \footnote{Based on the $\epsilon$-constraint method, a main and a secondary objective functions are selected, and the purpose is to optimize the main objective function and limit the secondary function by some allowable amount. } method to solve the problem (\ref{p31}). Based on this approach, the lifetime of critical nodes serves as the main objective function, and the lifetime of the relay node is the secondary objective. 
The relay node's lifetime, as the secondary objective, must be greater than that of the critical node and is added as a constraint in the optimization problem. Indeed, we incorporate relay node's lifetime as a constraint in the optimization problem to ensure that the entire the network lifetime is improved. 
By employing this method, we guarantee the resolving the energy hole by maximization the lifetime of critical nodes while preserving the lifetime of relay nodes and avoiding their placement in outlying positions. This approach effectively resolves the energy hole problem caused between sensor nodes and enhances the overall network lifetime. To sum up, the RNP is given as
\begin{subequations}\label{p32}
\begin{align}
&\mathbf{x}_{{i}} = \arg \max\tau_{c_{i}},\\
&s.t.\nonumber\\
& \tau_{c_{i}} = \frac{\epsilon_{c_{i}}}{p_{c_{i}r_{i}} [\mathbf{R}]_{{c_i}{r_i}} + p_{\text{r}}\sum_{ k \in \mathcal S \cup \mathcal R }^{} [\mathbf{R}]_{k{c_i}} }, \\
& \tau_{r_{i}} \geq \tau_{c_{i}}, \label{eq34c} \\
&p_{c_{i}r_{i}} =
\begin{cases}
p_{s} + \alpha\left( f \right)^{d_{c_{i}r_{i}}}d_{c_{i}r_{i}}^{2}\ ,d_{c_{i}r_{i}} < d_{t}\\
p_{s} + \alpha\left( f \right)^{d_{c_{i}r_{i}}}d_{c_{i}r_{i}}^{4},\ d_{c_{i}r_{i}} \geq d_{t} 
\end{cases},
\\
&d_{c_{i}r_{i}} ^2= \left\| \mathbf{l}_{r_{i}} - \mathbf{l}_{c_{i}} \right\|^{2},\label{32_e}\\
&\mathbf{l}_{r_{i}} = \sum_{j \in N_{c_{i}}^{u}\cup c_i }^{}{\theta_{j}\mathbf{l}_{j}},\\
&\mathbf{\Theta} \succcurlyeq \mathbf{0},\\
&\mathbf{1}^T \mathbf{\Theta} = 1,\\ 
&\mathbf{p}_{r_i} \succcurlyeq \mathbf{0},\\
&\mathbf{d}_{r_i} \succcurlyeq \mathbf{0}.
\end{align}
\end{subequations}
where the decision variable is updated as
$\mathbf{x}_{{i}} \triangleq \left\lbrack \mathbf{p}_{r_i},p_{c_{i}r_{i}},d_{c_{i}r_{i}},\mathbf{l}_{r_{i}},\mathbf{\Theta} \right\rbrack \in \mathbb{R}^{2\left| N_{c_{i}}^{u} \right| + 6}. $
It can be shown that the (\ref{eq34c}) is rewritten as a linear constraint
\begin{equation}
\mathbf{1}^{T}\mathbf{D}\mathbf{x}_{{i}} \leq \gamma_{0},
\end{equation}
where
\begin{equation}
\gamma_{0} = \left( p_{{r}}\sum_{ k \in \mathcal S \cup \mathcal R }^{} [\mathbf{R}]_{k{c_i}} \right)\left( \frac{\epsilon_{r_{i}}}{\epsilon_{c_{i}}} \right) - p_{\text{r}} [\mathbf{R}]_{{c_i}{r_i}},
\end{equation}
\begin{equation}
\mathbf{D} = \mathrm{diag}\ \left( a_{1},\ldots ,a_{2\left| N_{c_{i}}^{u} \right| + 6}\right),
\end{equation}
and 
\begin{equation}
a_{j} = 
\begin{cases}
[\mathbf{R}]_{{r_i}j},\ j \in \{ 1,\ldots,\left| N_{c_{i}}^{u} \right|\} \\
- \left( \frac{\epsilon_{r_{i}}}{\epsilon_{c_{i}}} \right) [\mathbf{R}]_{{c_i}{r_i}},\ j = \left| N_{c_{i}}^{u} \right| + 1 \\
0,\ o.w \\
\end{cases}.
\end{equation}
Problem (\ref{p32}) is a non-convex problem that includes the convex objective function with convex (and/or linear) and DC constraints. 
Considering that the constraint (\ref{32_e}) is DC and non-convex, this leads to the non-convexity of the proposed RNP.
However, we explore a novel convex programming model equivalent to this problem.
To do so, we apply a novel transformation to these problems whose detailed expressions are developed in Appendix \ref{A1}. 
Based on the proposed transformation, lemma \ref{lem1}, and definition of variable $t$, the proposed problem to place $r_i$ in the network is the form of
{\color{black}{\begin{subequations}
\begin{align}
&[\mathbf{x}_{{i}}, t]= \arg \min p_{c_{i}r_{i}}\\
&s.t.\nonumber\\
&p_{c_{i}r_{i}}=
\begin{cases}
&p_{s} + \alpha\left( f \right)^{d_{c_{i}r_{i}}}d_{c_{i}r_{i}}^{2} \ ,d_{c_{i}r_{i}} < d_{t}\\
& p_{s} + \alpha\left( f \right)^{d_{c_{i}r_{i}}}d_{c_{i}r_{i}}^{4} ,\ d_{c_{i}r_{i}} \geq d_{t} \\
\end{cases}\\
&\mathbf{1}^{T}\mathbf{D}\mathbf{x}_{{i}} \leq \gamma_{0}, \\
& \left\| \mathbf{l}_{r_{i}} - \mathbf{l}_{c_{i}} \right\|^2-t = 0,\label{40e}\\
& d_{c_{i}r_{i}}^2 -t= 0\\
&\mathbf{l}_{r_{i}} = \sum_{j \in N_{c_{i}}^{u}\cup c_i }^{}{\theta_{j}\mathbf{l}_{j}},\\
&\mathbf{\Theta} \succcurlyeq \mathbf{0},\\
&\mathbf{1}^T \mathbf{\Theta} = 1,\\
&\mathbf{p}_{r_i} \succcurlyeq \mathbf{0},\\
&\mathbf{d}_{r_i} \succcurlyeq \mathbf{0}.
\end{align}
\label{p44}
\end{subequations}}}
\vspace{-1mm}
which is known as the epigraph form of the problem: 
{\color{black}{\begin{subequations}
\begin{align}
&[\mathbf{x}_{{i}}, t]= \arg \min \begin{cases}
&p_{s} + \alpha\left( f \right)^{d_{c_{i}r_{i}}}d_{c_{i}r_{i}}^{2} \ ,d_{c_{i}r_{i}} < d_{t}\\
& p_{s} + \alpha\left( f \right)^{d_{c_{i}r_{i}}}d_{c_{i}r_{i}}^{4} ,\ d_{c_{i}r_{i}} \geq d_{t} \\
\end{cases}\\
&s.t.\nonumber\\
&\mathbf{1}^{T}\mathbf{D}\mathbf{x}_{{i}} \leq \gamma_{0}, \\
& \left\| \mathbf{l}_{r_{i}} - \mathbf{l}_{c_{i}} \right\|^2-t \leq 0,\label{40e}\\
& d_{c_{i}r_{i}}^2 -t\leq 0\\
&\mathbf{l}_{r_{i}} = \sum_{j \in N_{c_{i}}^{u}\cup c_i }^{}{\theta_{j}\mathbf{l}_{j}},\\
&\mathbf{\Theta} \succcurlyeq \mathbf{0},\\
&\mathbf{1}^T \mathbf{\Theta} = 1,\\
&\mathbf{p}_{r_i} \succcurlyeq \mathbf{0},\\
&\mathbf{d}_{r_i} \succcurlyeq \mathbf{0}.
\end{align}
\label{p45}
\end{subequations}}}
Our proposed RNP belongs to non-differentiable convex optimization problems. 
The advantage of convex problems over non-convex counterparts is that, in general, a global optimum can be computed with good precision and within a reasonable time, independent of initialization \cite{soleimanpour2018low}. To obtain the optimal position of the relay node we resort to off-the-shelf convex solver CVX which is a Matlab-based modeling system for convex optimization. CVX can solve much more complex convex optimization problems, including non-differentiable functions. It was developed using interior point methods and gives numerical solutions for the convex optimization problem.}}
By solving it, a set of optimal solutions $(\mathbf l_{r_1},...,\mathbf l_{r_{M_0}})$ and $\mathbf{p}_r^*=[\mathbf 1^T\mathbf p_{r_1},...,\mathbf 1^T \mathbf p_{r_{M_0}}]^T$ showing the vector position of relay nodes and overall transmitting energy consumption of them is obtained, respectively. This optimal solution shows the optimal value $\tau^*$. The pseudo-code of the proposed network lifetime is given in algorithm 1.
\\
\begin{algorithm}

\KwData{The set of sensor nodes ($\mathcal S$)}
\KwResult{Position of relay nodes, rate array}
1: \textbf{for} each relay node $r_i$ \textbf{do}\\
2:\quad \textbf{for} each node $n \in \mathcal S\cup R$ \textbf{do} \\
3:\quad \quad Compute $\tau_n$ using (\ref{N_def})\\
4:\quad \textbf{end for}\\
5:\quad $c_i=\arg \min \tau_n$\\
6:\quad Construct the set $N_ {c_i}^u$ using (\ref{eq22})\\
7:\quad Define the convex hull system as given in (\ref{cvx_def1})\\
8:\quad Define the multi-objective problem (\ref{p31})\\
9:\quad Apply the $\epsilon$-constraint approach to convert the multi-objective problem (\ref{p31}) into the single-objective problem (\ref{p32})\\
10:\quad Apply the transformation (\ref{p49}) and Lemma \ref{lem1} to form problem (\ref{p44})\\
11:\quad Define the convex-based RNP (\ref{p45}) by using the epigraph form of (\ref{p44})\\
12:\quad Solve problem (\ref{p45}) using CVX tool to obtain ${\mathbf l}_{r_i}$\\
13:\quad Update $\mathbf R$ based on Eqs. (\ref{eq23})-(\ref{eq25})\\
14: \textbf{end for}\\
15: \textbf{return} $\mathbf R, {\mathbf l}_{r_i}, i=1, \ldots, \rm M$\\
\caption{Summary of proposed ORNS approach}

\end{algorithm}
\begin{table*}[!t]
\centering
\caption {\textcolor{black}{Computations of proposed ORNS method}}\label{t_complexity}

\resizebox{0.9\textwidth}{!}{
\begin{tabular}{|p{1.8cm} | p{3.4cm}|p{3.4cm}| p{6.5cm}|}
\hline
Operation &Computations of formulating (\ref{p45})& Computations of solving (\ref{p45})& Total computations \\
\hline
Linear search& $3$&$\rm {iter_1}$&$3 ({\rm N}+{\rm M} -1)+\rm iter_1 ({\rm N}+{\rm M}-1)$\\
\hline
Assignment& $3$&\textemdash&$3$\\
\hline
Addition &$({\rm N}+{\rm M}-1)({\rm N}+{\rm M} )$& $({\rm N}+{\rm M}+13) \ \rm{ iter_1}$ &$({\rm N}+{\rm M})({\rm N}+{\rm M}-1)+({\rm N}+{\rm M}+13) \ \rm{iter_1}$\\
\hline
Division & ${\rm N}+{\rm M}-1$ & $\rm{ iter}_1$ &$({\rm N}+{\rm M} -1)\ \rm{ iter_1}$ \\
\hline
Multiplication& $({\rm N}+{\rm M}-1)^2$ & $({\rm N}+{\rm M} +12) \ \rm{ iter_1}$ &$({\rm N}+{\rm M}-1)^2+({\rm N}+{\rm M}+12) \ \rm{ iter_1}$\\
\hline 
\end{tabular}}
\end{table*}
\begin{table*}
\centering
\caption {Computations of previous RNP schemes }\label{table3}

\resizebox{0.96\textwidth}{!}{
\begin{tabular}{|p{1.8cm}| p{3.5cm}|p{3.5cm}|p{3.5cm}|p{3.5cm}|}
\hline 
%
\multirow{2}{*}{Operation} & \multirow{2}{*}{Computations of DCA\cite{mohammadi2020increasing} }& \multicolumn{3}{c|}{Computations of RA \cite{liu2017optimal}}\\
\cline{3-5}
& & Computations of initializing the position of relay nodes on the surface of the water & Computations of adjusting the depth&Total computations \\
\hline
Linear search &$2 ({\rm N+M}-1)$&\textemdash&${\rm N+2M-2}$&${\rm N+2M-2}$\\
\hline 
Assignment&$4$&$2{\rm M}+4\alpha$&$\beta$&$2{\rm M}+4\alpha+\beta$\\
\hline
Addition & $({\rm N+M}) ({\rm N+M}-1)+({\rm N+M}+17) \ {{\rm iter_2}}$ & ${\rm N}({\rm 2N-4}+\alpha)+\alpha+\alpha^2$ & $2{\rm( N+M)(M+N-2)}+5\beta+{\rm 13 \ iter_3}$ & $2{\rm( N+M)(M+N-2)}+{\rm N}({\rm 2N-4}+\alpha)+\alpha+\alpha^2+5\beta+{\rm 13 \ iter_3}$\\
\hline
Division &$({\rm N+M}-1)+5 \ {\rm iter_2}$ &${\rm N+}3\alpha$&$\rm N+M$&${\rm2 N+M}+3\alpha$\\
\hline
Multiplication &$({\rm N+M}) ({\rm N+M}-1)+({\rm N+M}+23) \ {\rm iter_2}$&${\rm N}^2+4\alpha$&$({\rm N+M})^2+4\beta+14 \ \rm iter_3$&$({\rm N+M})^2+{\rm N}^2+4\alpha+4\beta+14 \ \rm iter_3$\\
\hline
\end{tabular}}
\end{table*}

\subsection{Step 2 - Relay nodes minimization: RNMI scheme}
\textcolor{black}{
By definition, $\rm M$ is an indication of the selected relay nodes which are in an acceptable network lifetime extension and we have 
$\rm M \leq M_0 $. Moreover, let $\mathbf{p}_c$ includes the total transmitting energy consumption of nodes $c_i, i=1, \ldots, \rm M_0$ before placing relays. Considering the effectiveness of relay nodes in the extension of network lifetime, a relay should be introduced if necessary in terms of the network lifetime. Toward this, we cast to the zero-norm and one-norm principles to define the relay node selection problem as: 
\begin{subequations}
\begin{align}
&\gamma=\arg \{\min \norm {\mathbf p_r}_0 , \max \norm {{\mathbf {p}_r-\mathbf {p}_c}}_1 \},\\
&s.t.\nonumber\\
&\min \tau_i=\tau^*, i\in \mathcal N,\\
&{[\mathbf {p}_{r}]_i}\in \{\mathbf 0, {[\mathbf{p}_{r}^*]_i}\},\\
& [\mathbf{p}_c]_i=
\begin{cases}
p_{c_{i}r_{i}} \times \sum_{j\in N_ {c_i}^u} [\mathbf{R}]_{{c_i}j} & [\mathbf {p}_{r}]_i=[\mathbf{p}_{r}^*]_i\\
\sum_{j \in N_{c_{i}}^{u}}^{}{p_{c_{i}j}}[\mathbf{R}]_{{c_i}j}& [\mathbf {p}_{r}]_i=0
\end{cases},i=1, \ldots, \rm M_0.
\end{align}
\end{subequations}
where zero-norm as the cardinality function returns the non-zero entry in the
$\mathbf p_r$. 
In addition, $\tau^*$ is the answer from the first optimization step added as the constraint. By employing the scalarization method to mix the zero-norm and one-norm functions, the problem can be given as:
\begin{subequations}
\begin{align}
&\gamma=\arg \{\min \omega_1 \norm {\mathbf {p}_r}_0 -\omega_2 \norm {{\mathbf {p}_r-\mathbf {p}_c}}_1 \},\\
&s.t.\nonumber\\
&\min \tau_i=\tau^*, i\in \mathcal N,\\
&{[\mathbf {p}_{r}]_i}\in \{\mathbf 0, {[\mathbf{p}_{r}^*]_i}\},\\
& [\mathbf{p}_c]_i=
\begin{cases}
p_{c_{i}r_{i}} \times \sum_{j\in N_ {c_i}^u} [\mathbf{R}]_{{c_i}j} & [\mathbf {p}_{r}]_i=[\mathbf{p}_{r}^*]_i\\
\sum_{j \in N_{c_{i}}^{u}}^{}{p_{c_{i}j}}[\mathbf{R}]_{{c_i}j}& [\mathbf {p}_{r}]_i=0
\end{cases},i=1, \ldots, \rm M_0.
\end{align}
\label{problem47}
\end{subequations}
where the weights for each function $\omega_1$ and $\omega_2$ can be chosen according to the kind of tradeoffs we are willing to make and $\omega_1+\omega_2=1$. 
Recall step function $s(x)$ with $s(x): \mathbb R \rightarrow \mathbb R^+$ that $s(x) =1$ for $x>0$ and $s(x)=0$ for $x\leq0$, the zero-norm can be written as the sum of discontinuous step function as:
\begin{equation}
\norm {\mathbf {p}_r}_0=\sum_{i=1}^{\rm M_0} s([\mathbf {p}_{r}]_i)
\end{equation}
Here, by applying the nonnegative feature of $[\mathbf {p}_{r}]_i$, we use the following continuously differentiable concave approximation of the step function for nonnegative variable \cite{stepapprox}:
\begin{equation}
s([\mathbf {p}_{r}]_i)=1-\exp (-\eta [\mathbf {p}_{r}]_i)
\end{equation}
that $\eta >0$. \textcolor{black}{ Therefore the problem (\ref{problem47}) is equivalently presented as: 
\begin{subequations}
\begin{align}
&\gamma=\arg\min \{ \omega_1\sum_{i=1}^{M_0} \left({1-\exp(-\eta [\mathbf {p}_{r}]_i)} \right)- \omega_2\norm {\mathbf {p}_r-\mathbf{p}_c}_1\},\\
&s.t.\nonumber\\
& \tau_i\geq \tau^*, i\in \mathcal N,\\
&{[\mathbf {p}_{r}]_i}\in \{\mathbf 0, {[\mathbf{p}_{r}^*]_i}\},\\
& [\mathbf{p}_c]_i=
\begin{cases}
p_{c_{i}r_{i}} \times \sum_{j\in N_ {c_i}^u} [\mathbf{R}]_{{c_i}j} & [\mathbf {p}_{r}]_i=[\mathbf{p}_{r}^*]_i\\
\sum_{j \in N_{c_{i}}^{u}}^{}{p_{c_{i}j}}[\mathbf{R}]_{{c_i}j}& [\mathbf {p}_{r}]_i=0
\end{cases},i=1, \ldots, \rm M_0.
\end{align}
\end{subequations}
The obtained model is a mixed-integer convex programming model. Models with integer and binary variables must still obey all of the same disciplined convex programming rules that CVX enforces for continuous models. 
The above approximation model is a smooth optimization problem with tolerable complexity and 
show that each relay node should not be deployed if 
the transmit power consumption by the cooperation of the relay node 
is more than the direct transmit power consumption and should be otherwise. }
Let us now evaluate our approach against 
the RA \cite{liu2017optimal} and DCA \cite{mohammadi2020increasing} schemes
in terms of complexity and network lifetime
}
\section{Complexity analysis}

In this section, we compute the time complexity of our proposed RNP, DCA and RA methods. We do this by calculating the number of computations as shown in Tables \ref{t_complexity} and \ref{table3}. One can notice that ${\rm iter}_i, i\in\{1, 2, 3\}$ represents the number of iterations required for each RNP scheme to converge, which is significantly lower than $({\rm N}+{\rm M})$ in practical scenarios. Additionally, $\alpha$ and $\beta$ are small constant values. Consequently, Table \ref{t_complexity} demonstrates that the worst-case time complexity of our proposed method is limited to ${\mathcal O} ({\rm N}+{\rm M})^2$. Furthermore, based on table \ref{table3}, it can be concluded that the time complexity of DCA and RA is also bounded by ${\mathcal O} ({\rm N+M}) ^2 $. Subsequently, we will assess the performance of the proposed ORNS method through simulations.
\begin{table}[!t]
\caption{List of parameters} 
\centering 
\resizebox{0.95\columnwidth}{!}{
\begin{tabular}{|c | c| c|} 
\hline
Notation &Definition & Value \\
\hline 
$H_s$ &Depth of the water &2000 m \\
\hline
$C_R$ &Communication range &500 m \\
\hline
$f$ &Frequency of acoustic signal & 1 kHz \\
\hline
$p_s$ &The power consumption for processing in sending data &1 mw/bit\\
\hline
$p_r$ &The power consumption for processing in receiving data &1 mw/bit \\
\hline
$d_t$ & Distance threshold & 87 m \\
\hline
$L_c$ & Link capacity & 10 kb/sec\\
\hline
$\epsilon_p$ & Primary energy of typical node $i$ & $4\times10^5$ J \\
\hline
${g}_i$ & Generation rate of sensor node $i$ &$10\thicksim200$ bit/sec \\
\hline 
\end{tabular}}
\label{t2}
\end{table}
\section{Algorithm evaluation results }
In this section, we evaluate and compare the proposed RNP with the heuristic RA method \cite{liu2017optimal} and DCA approach \cite{mohammadi2020increasing} using multiple simulation scenarios. 
The experiments are performed by MatLab 2017b. 
The simulation parameters and their notations are provided in Table \ref{t2}. The depth of water was taken as $2000 \ \rm m$, the generation rate of each sensor node is set randomly between $10$ and $200 \ \rm bit/Sec$, and the primary energy of nodes was set to $4\times 10^5 \rm J$ \cite{alsalih2010placement}. The frequency of the acoustic signal was set to $1 \rm kHz$. 
Similar to \cite{liu2017optimal}, \cite{mohammadi2020increasing}, we design the deployment of 3-D underwater sensor networks in the cylindrical sensing field where they sense the environment. 
The gathered data is then transmitted to the SB, which is positioned at the origin.
Additionally, Table \ref{T3} outlines the different RNP cases, using $\gamma_r=\frac{|\mathcal R|}{|\mathcal S|}$ to denote the percentage of employed relays and $\rm{RF} = \frac{\epsilon_{c_1}}{\epsilon_{p}},$ as the residual energy factor of the most critical node in the sensor network. 

The Imbalanced Factor of Energy Consumption (IEC) was calculated as
\begin{equation}
\mathrm{IEC}=\frac{\frac{1}{\rm N}\sum_{i\in \mathcal S}{\left(\mathrm{E}\left(\epsilon_i\right)-\mathrm{E}\left(\bar \epsilon\right)\right)^2}}{\sigma_0^2}
\end{equation}
where
$
\bar \epsilon=\frac{1}{{\rm N}}\sum_{i\in \mathcal S}{\epsilon_i},
$
and $\sigma_0^2$ is the normalization factor.
In the following we present a three-fold approach to evaluate the effectiveness of our relay node placement method. Firstly, we investigate regulation of the positions of relay nodes within the network at different scales. 
Secondly, we conduct a comprehensive performance evaluation of our proposed method by comparing it with existing approaches.
Thirdly, we delve into the details of our relay node selection design, which aims to minimize resource wastage in UASNs. 

\begin{table}[!t]
\centering
\caption {Considered RNP characteristics}\label{t1}
\resizebox{0.9\columnwidth}{!}{
\begin{tabular}{p{1.2cm} p{1.2cm} p{1.2cm} p{1.2cm} p{1.2cm}}
\hline 
\hline 
\centering
Case& A& B & C & D \\
\hline
\centering
$\rm{RF}$& 0.25 &0.75& 0.25 & 0.25 \\
\centering
$\gamma_r$ & 0.3 &0.3& 0.6 & 0.9 \\
\hline
\centering
\end{tabular}}
\label{T3}
\end{table}
\begin{figure*}[!t]
\center
\begin{subfigure}{0.32\textwidth}
\centering{\includegraphics[width = 0.99\columnwidth]{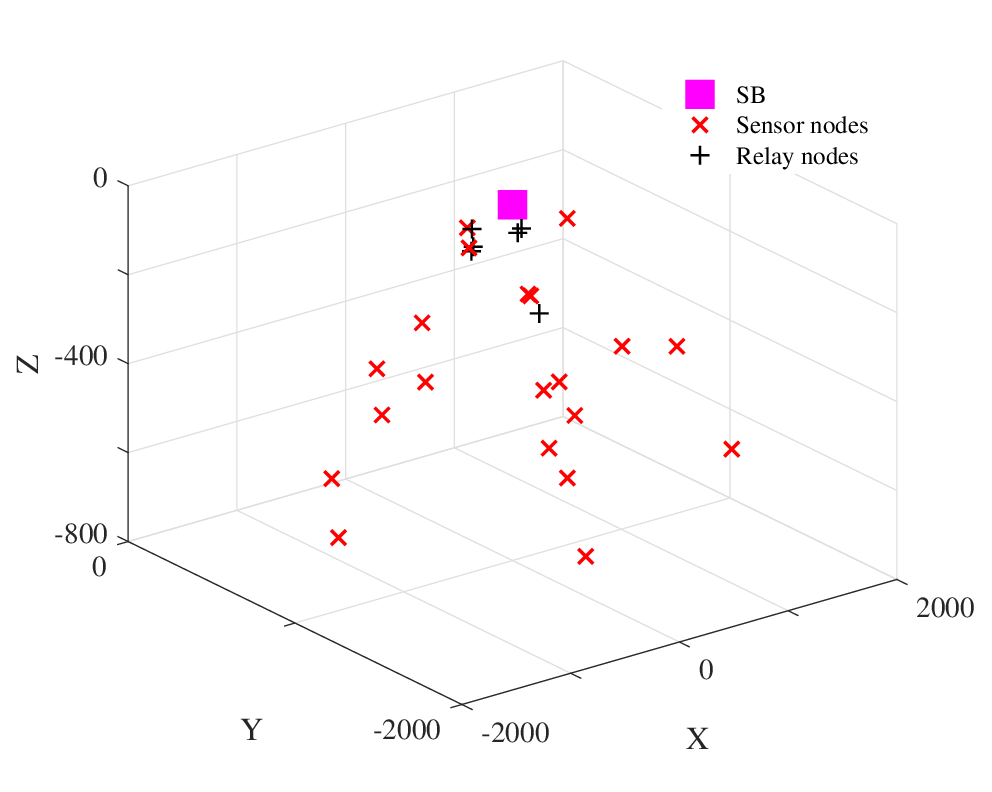}}
\vspace{-0.5\baselineskip}
\caption{RA method (first deployment)}
\label{effect_radius_lifetime}
\end{subfigure}
\begin{subfigure}{0.32\textwidth}
\centering{\includegraphics[width = 0.99\columnwidth]{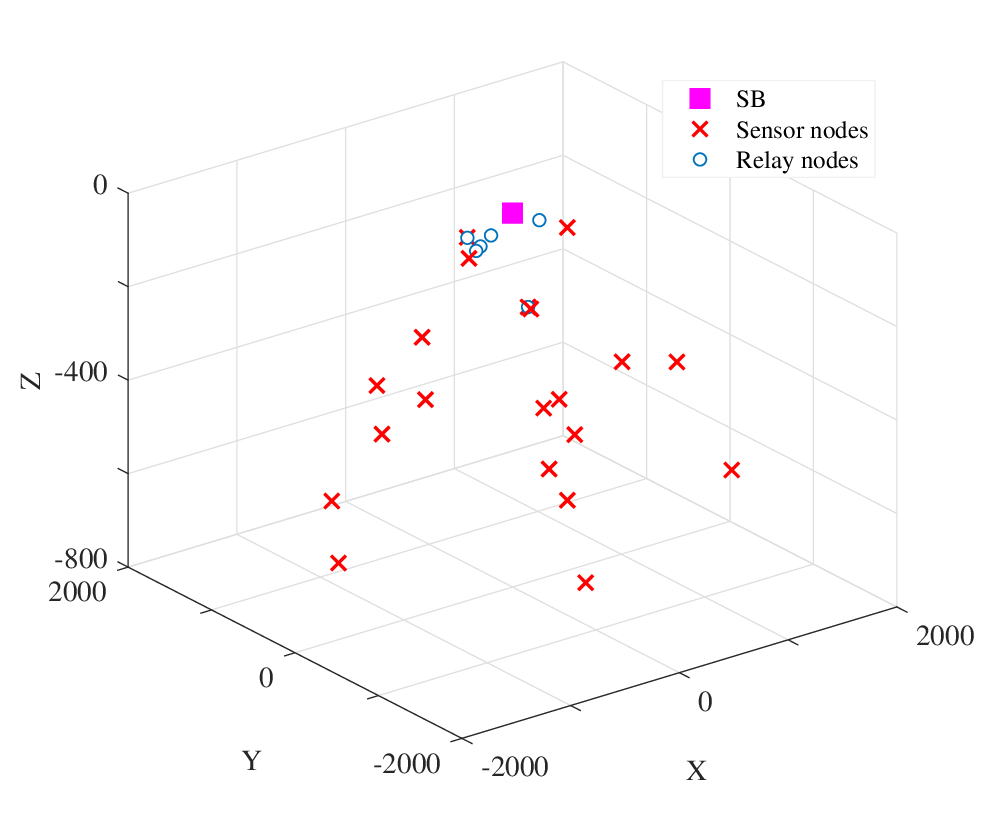}}
\vspace{-0.5\baselineskip}
\caption{DCA approach (first deployment)}
\label{effect_radius_residual}
\end{subfigure}
\begin{subfigure}{0.32\textwidth}
\centering{\includegraphics[width = 0.99\columnwidth]{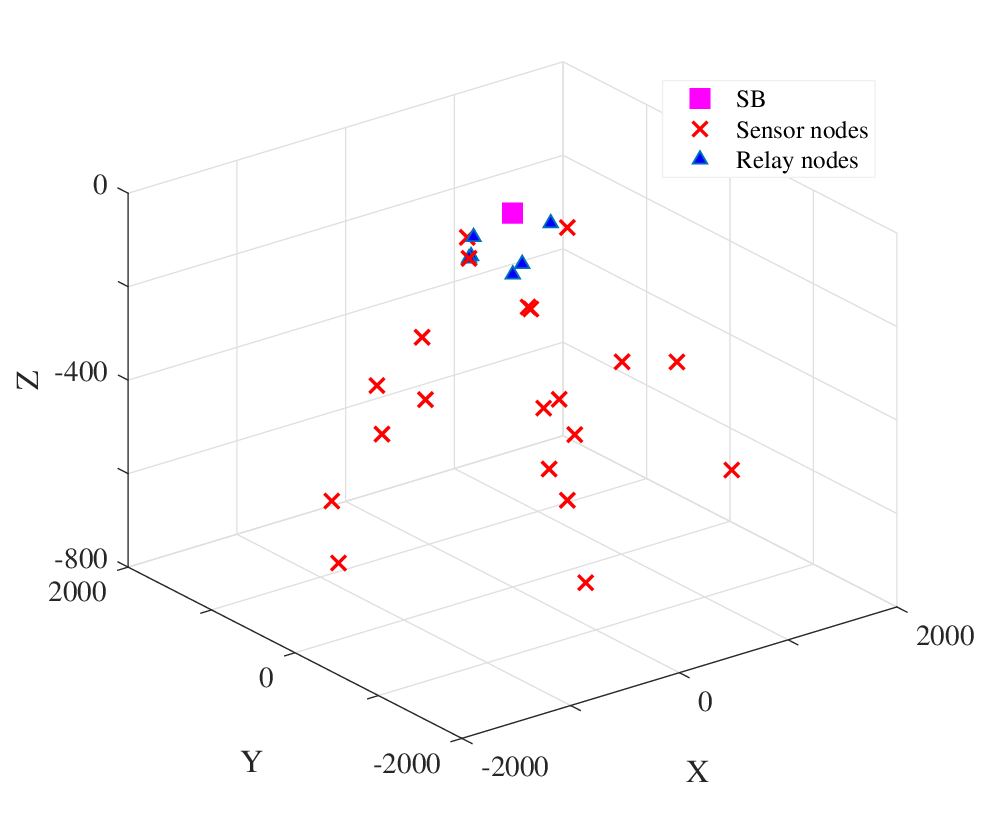}}
\vspace{-0.5\baselineskip}
\caption{Proposed approach (first deployment)}
\label{effect_radius_lifetime}
\end{subfigure}
\begin{subfigure}{0.32\textwidth}
\centering{\includegraphics[width = 0.99\columnwidth]{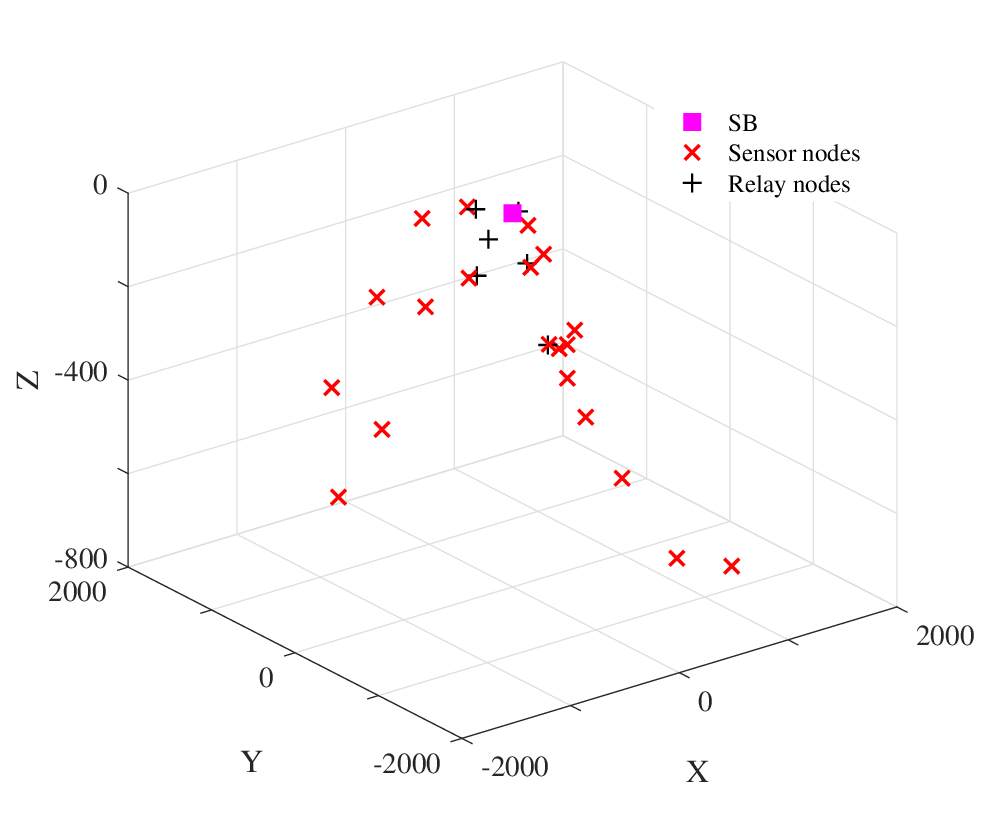}}
\vspace{-0.5\baselineskip}
\caption{RA method (second deployment)}
\label{Fig4-revised-ra}
\end{subfigure}
\begin{subfigure}{0.32\textwidth}
\centering{\includegraphics[width = 0.99\columnwidth]{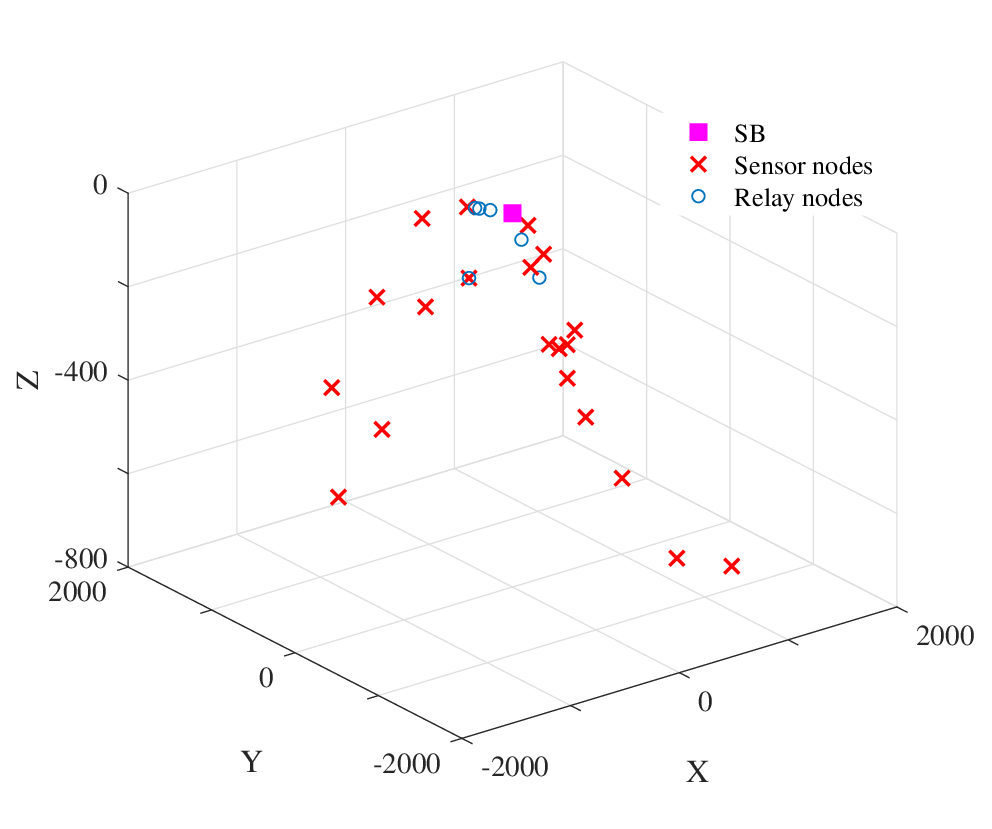}}
\vspace{-0.5\baselineskip}
\caption{DCA approach (second deployment)}
\label{Fig4-revised-dca}
\end{subfigure}
\begin{subfigure}{0.32\textwidth}
\centering{\includegraphics[width = 0.99\columnwidth]{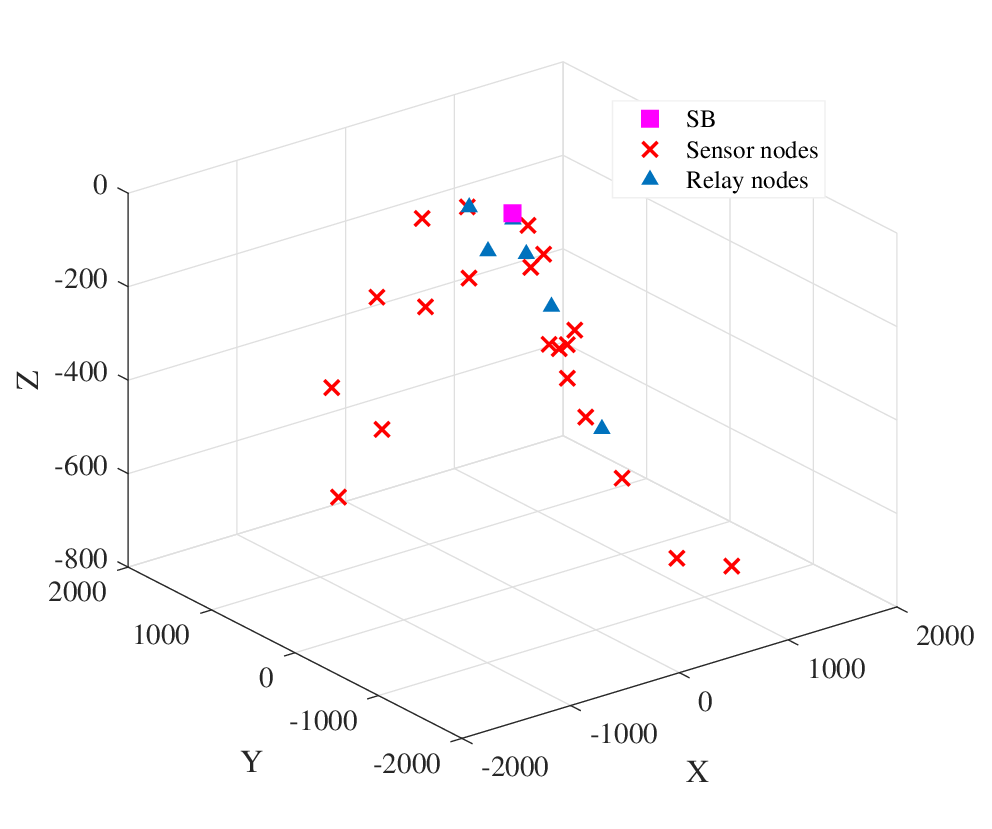}}
\vspace{-0.5\baselineskip}
\caption{Proposed approach (second deployment)}
\label{Fig4-revised-pr}
\end{subfigure}
\caption{Position of relay nodes in the network when $|\mathcal S|=20$, $\gamma_r=0.3$, and $\rm{RF}=0.25$ in different deployments}
\label{Fig4-revised}
\end{figure*}
\begin{figure*}[!t]
\center
\begin{subfigure}{0.32\textwidth}
\centering{\includegraphics[width = 0.99\columnwidth]{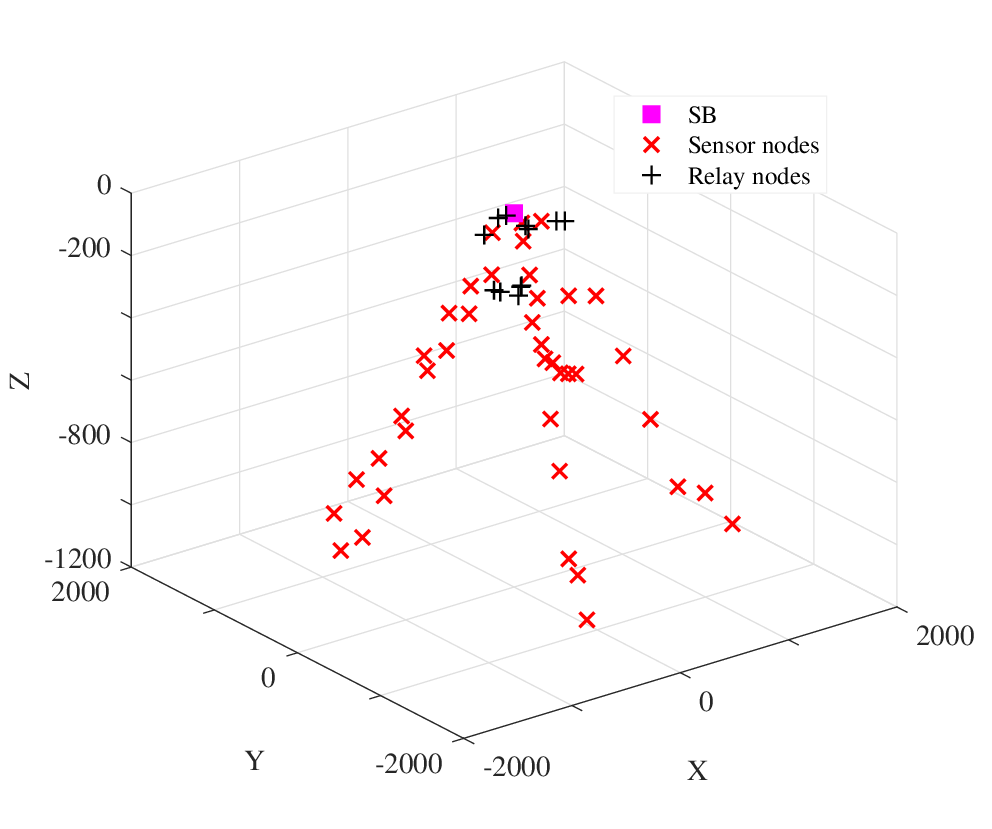}}
\vspace{-0.5\baselineskip}
\caption{RA method (first deployment)}
\label{Fig5a-revised}
\end{subfigure}
\begin{subfigure}{0.32\textwidth}
\centering{\includegraphics[width = 0.99\columnwidth]{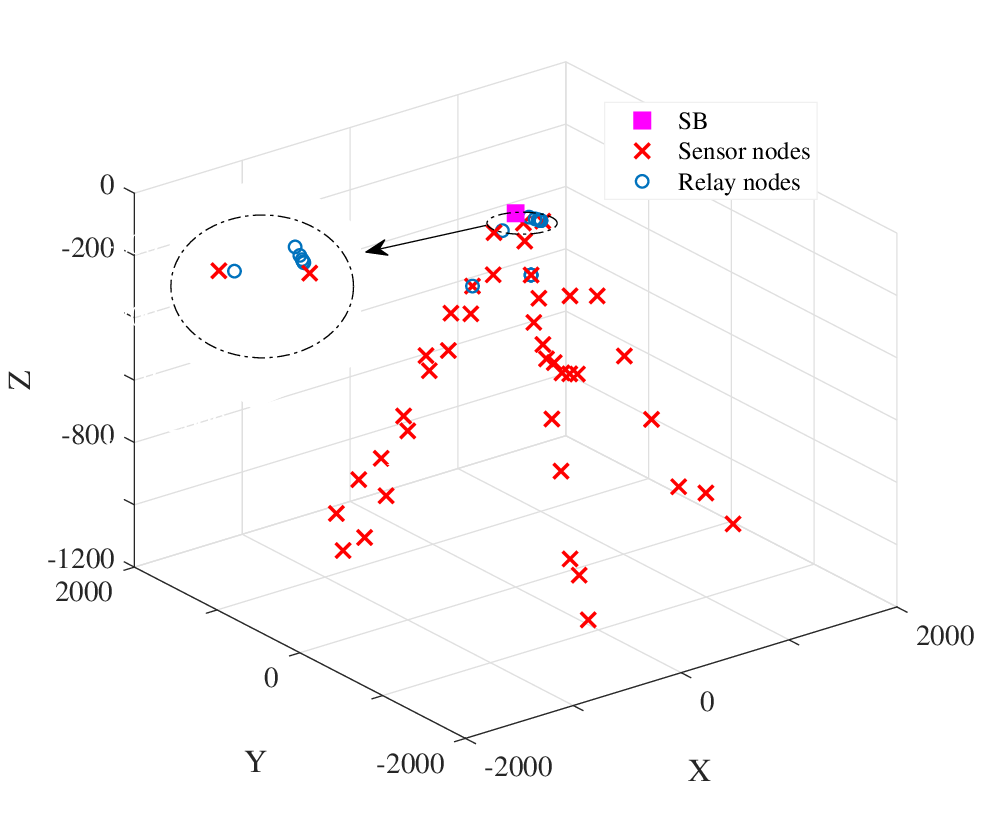}}
\vspace{-0.5\baselineskip}
\caption{DCA approach (first deployment)}
\label{Fig5-revised-dca}
\end{subfigure}
\begin{subfigure}{0.32\textwidth}
\centering{\includegraphics[width = 0.99\columnwidth]{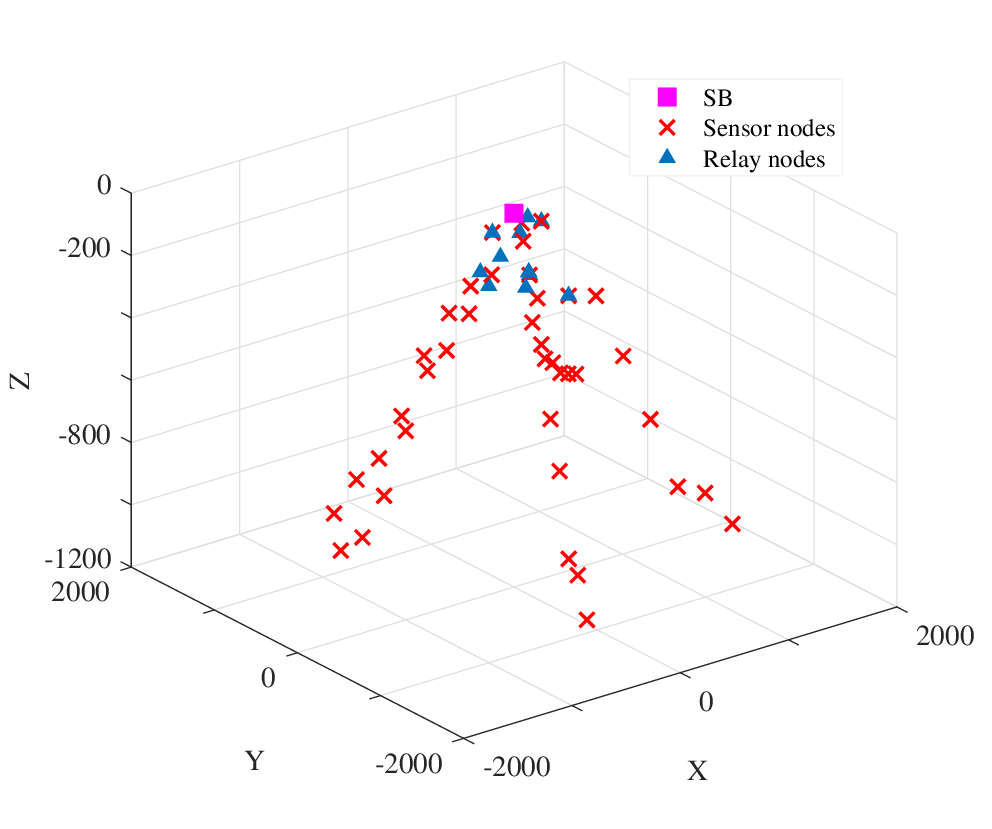}}
\vspace{-0.5\baselineskip}
\caption{Proposed approach (first deployment)}
\label{Fig5-revised-pr}
\end{subfigure}
\begin{subfigure}{0.32\textwidth}
\centering{\includegraphics[width = 0.99\columnwidth]{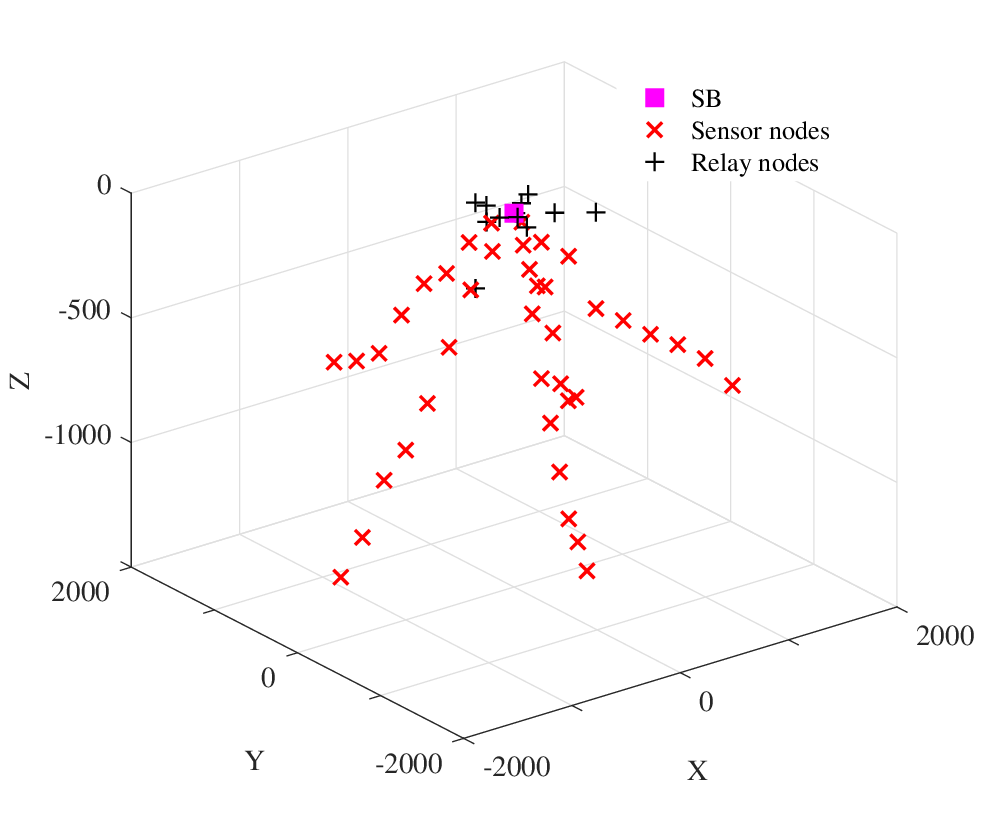}}
\vspace{-0.5\baselineskip}
\caption{RA method (second deployment)}
\label{Fig5d-revised}
\end{subfigure}
\begin{subfigure}{0.32\textwidth}
\centering{\includegraphics[width = 0.99\columnwidth]{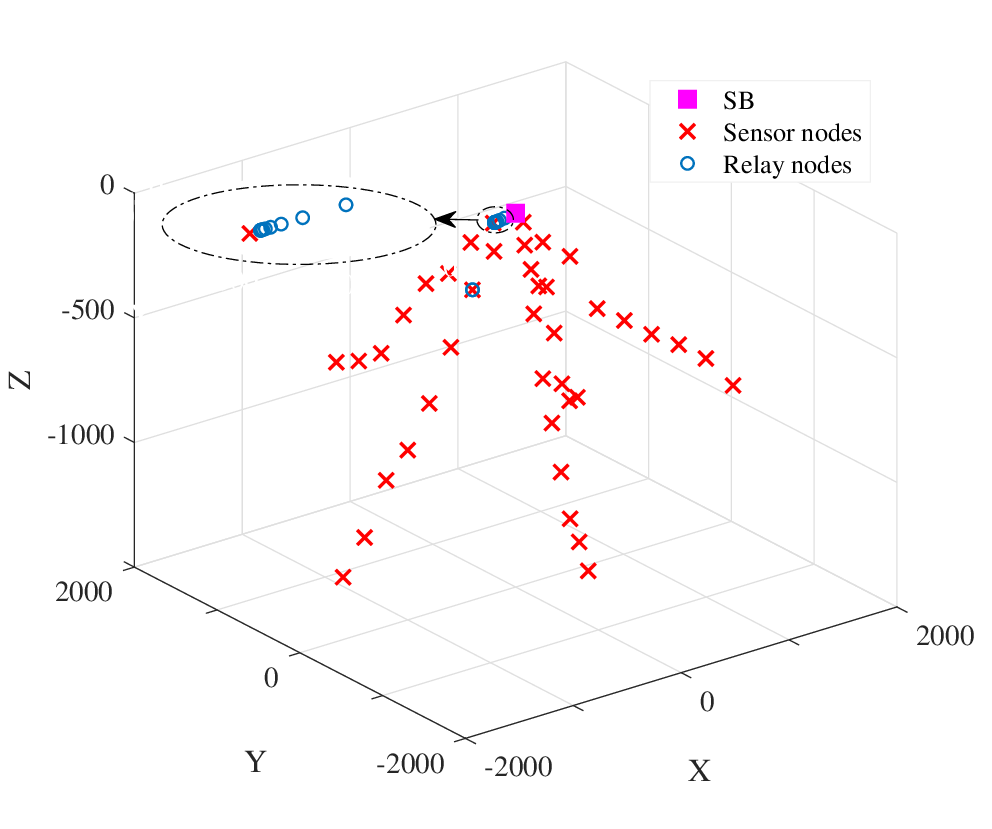}}
\vspace{-0.5\baselineskip}
\caption{DCA approach (second deployment)}
\end{subfigure}
\begin{subfigure}{0.32\textwidth}
\centering{\includegraphics[width = 0.99\columnwidth]{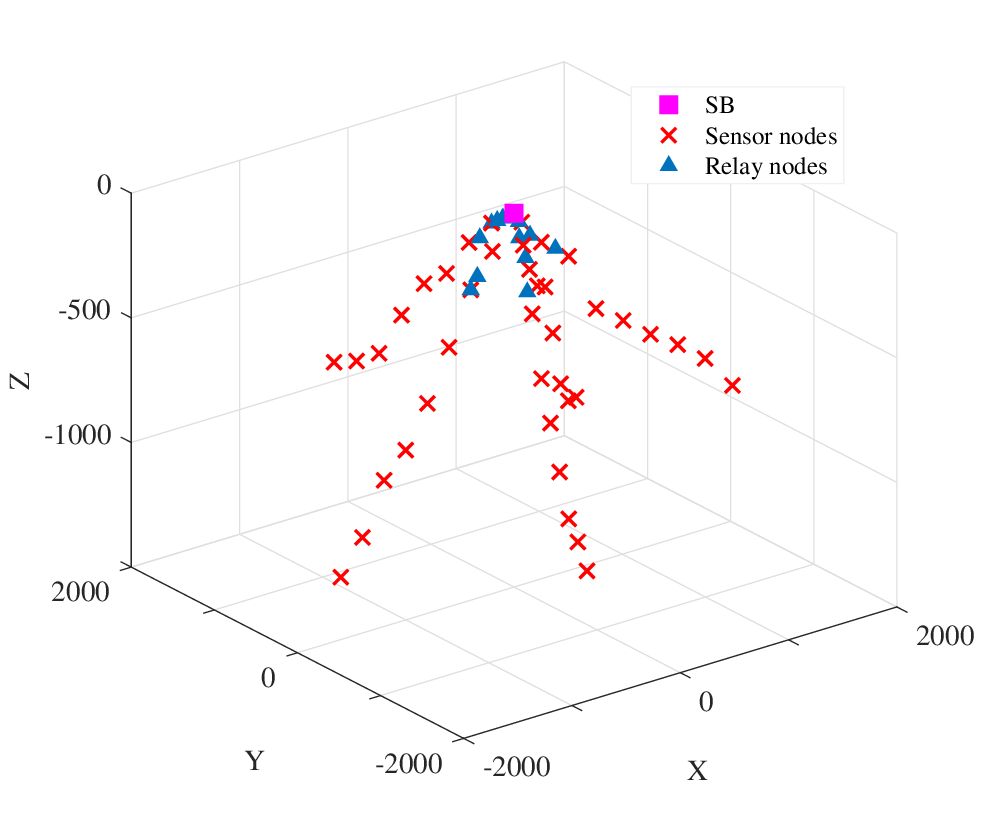}}
\vspace{-0.5\baselineskip}
\caption{Proposed approach (second deployment)}
\end{subfigure}
\caption{Position of relay nodes in the network when $|\mathcal S|=40$, $\gamma_r=0.3$, and $\rm{RF}=0.25$ in different deployments}
\label{Fig5-revised}
\end{figure*}

\begin{figure}[!t]
\center
\begin{subfigure}{0.45\textwidth}
\centering{\includegraphics[width = 0.99\columnwidth]{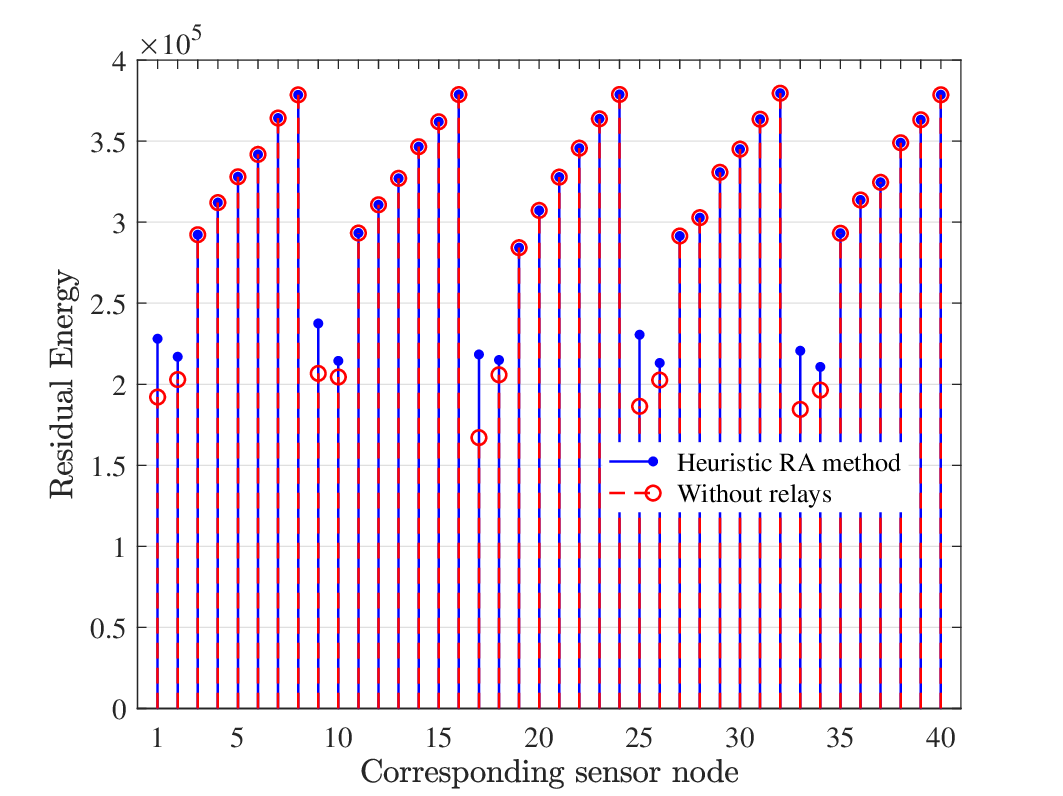}}
\vspace{-0.5\baselineskip}
\caption{RA method}
\label{Fig2-simulationA}
\end{subfigure}
\begin{subfigure}{0.45\textwidth}
\centering{\includegraphics[width = 0.99\columnwidth]{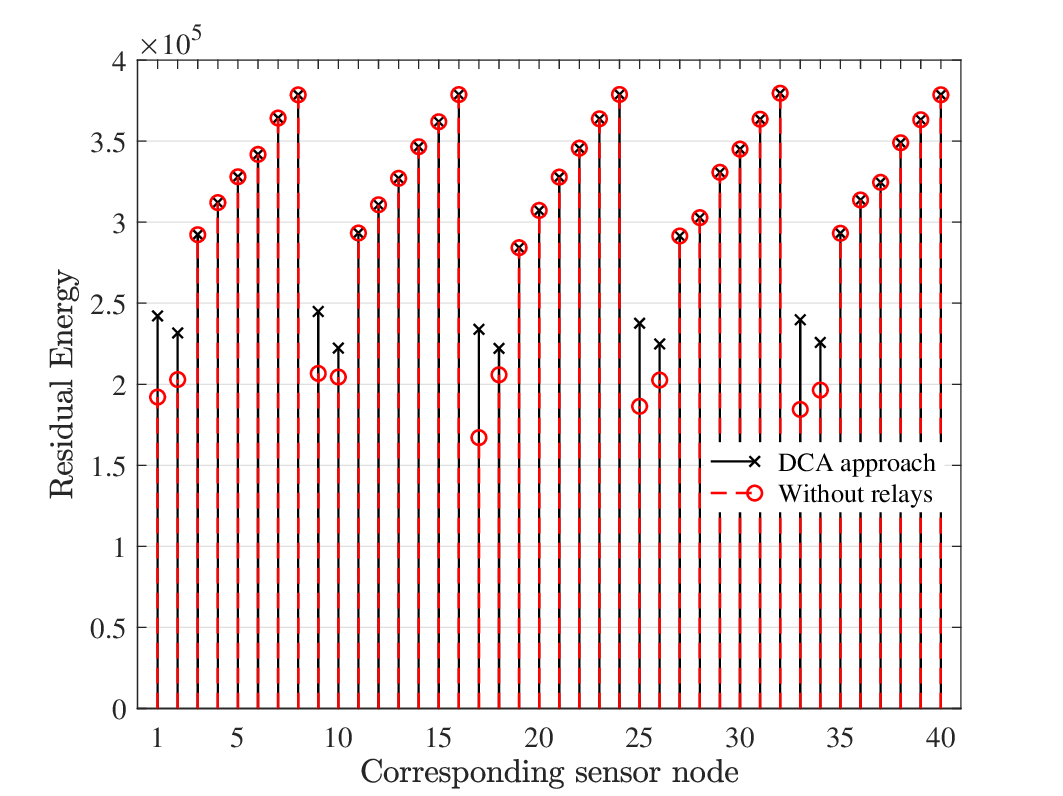}}
\vspace{-0.5\baselineskip}
\caption{DCA approach}
\label{Fig2-simulationB}
\end{subfigure}
\begin{subfigure}{0.45\textwidth}
\centering{\includegraphics[width = 0.99\columnwidth]{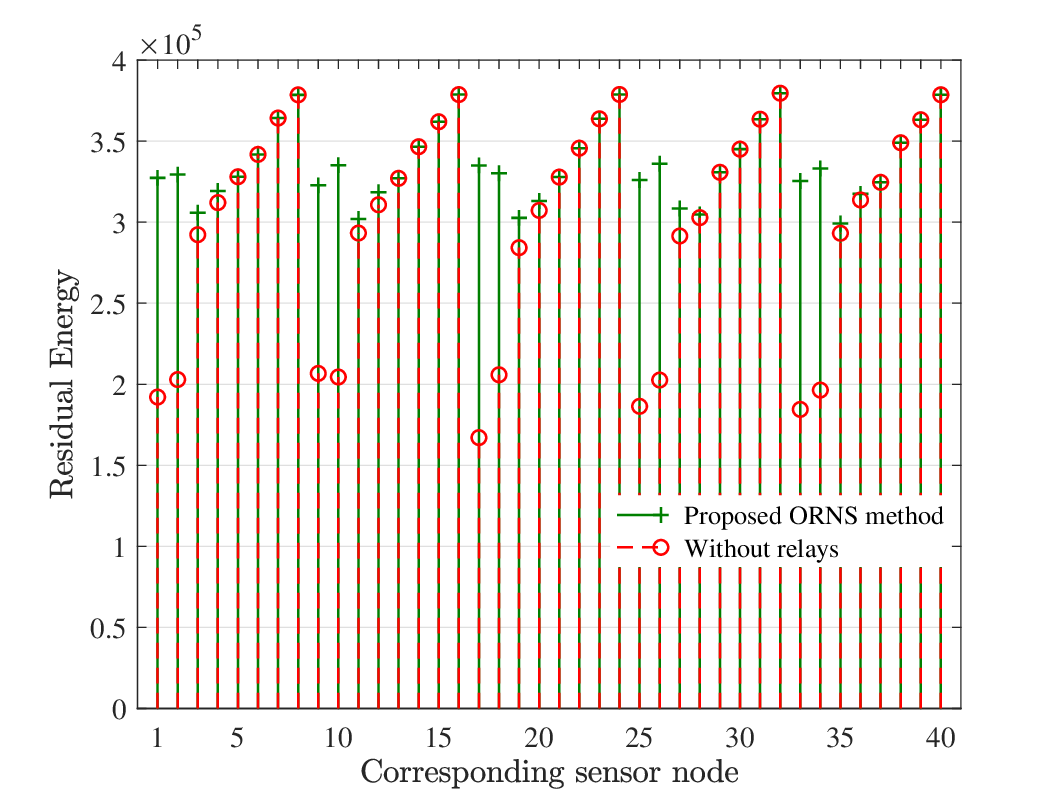}}
\vspace{-0.5\baselineskip}
\caption{Proposed approach}
\label{Fig2-simulationC}
\end{subfigure}
\caption{Residual energy of sensor nodes when $|\mathcal S|=40$, $\gamma_r=0.3$, and $\rm{RF}=0.25$}
\label{Fig2-simulation}
\end{figure}
\begin{figure}[!t]
\centering{\includegraphics[width=0.9\columnwidth]{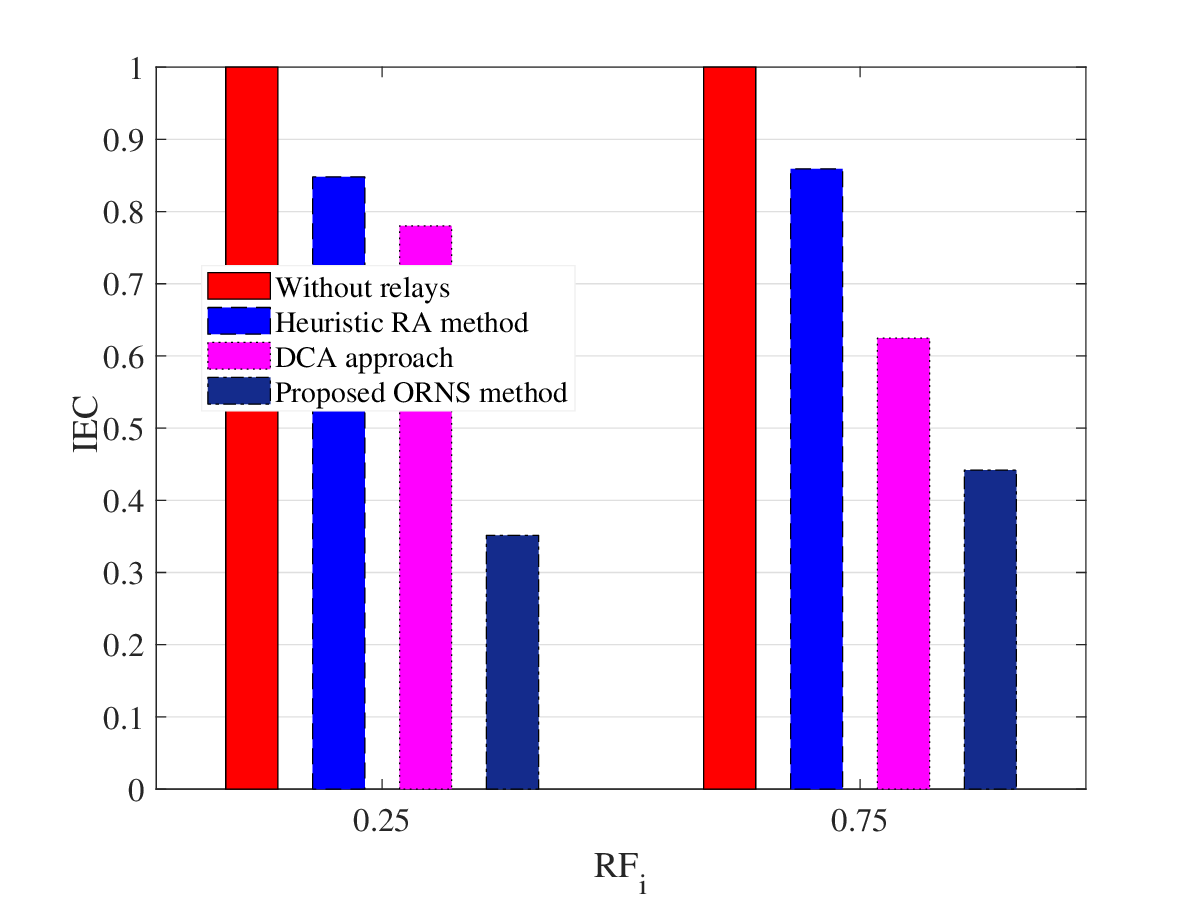}}
\caption{Imbalanced factor of energy consumption for different $\mathrm{RF}$s}
\label{Fig3-simulation}
\end{figure}
\begin{figure}[!t]
\begin{subfigure}{0.47\textwidth}
\centering{\includegraphics[width = 0.9\columnwidth]{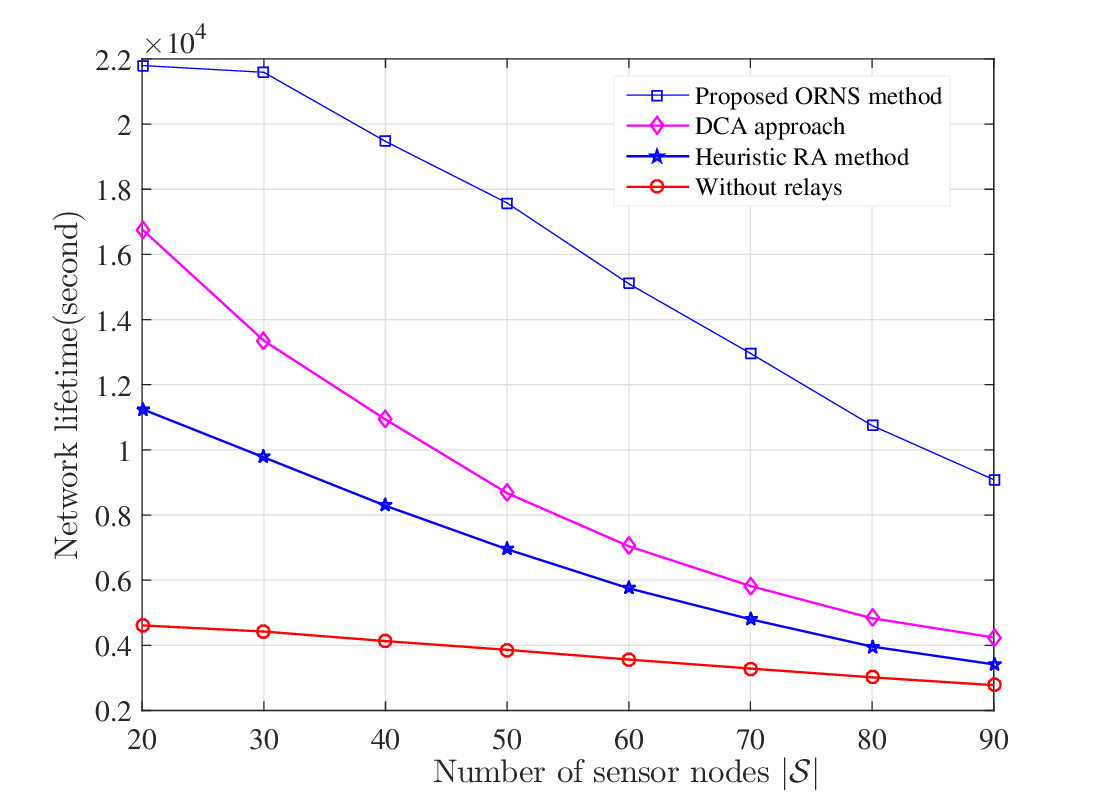}}
\caption{$\gamma_r=0.3, \mathrm{RF}=0.25$}
\label{Fig4-simulationA}
\end{subfigure}
\begin{subfigure}{0.47\textwidth}
\centering{\includegraphics[width = 0.9\columnwidth]{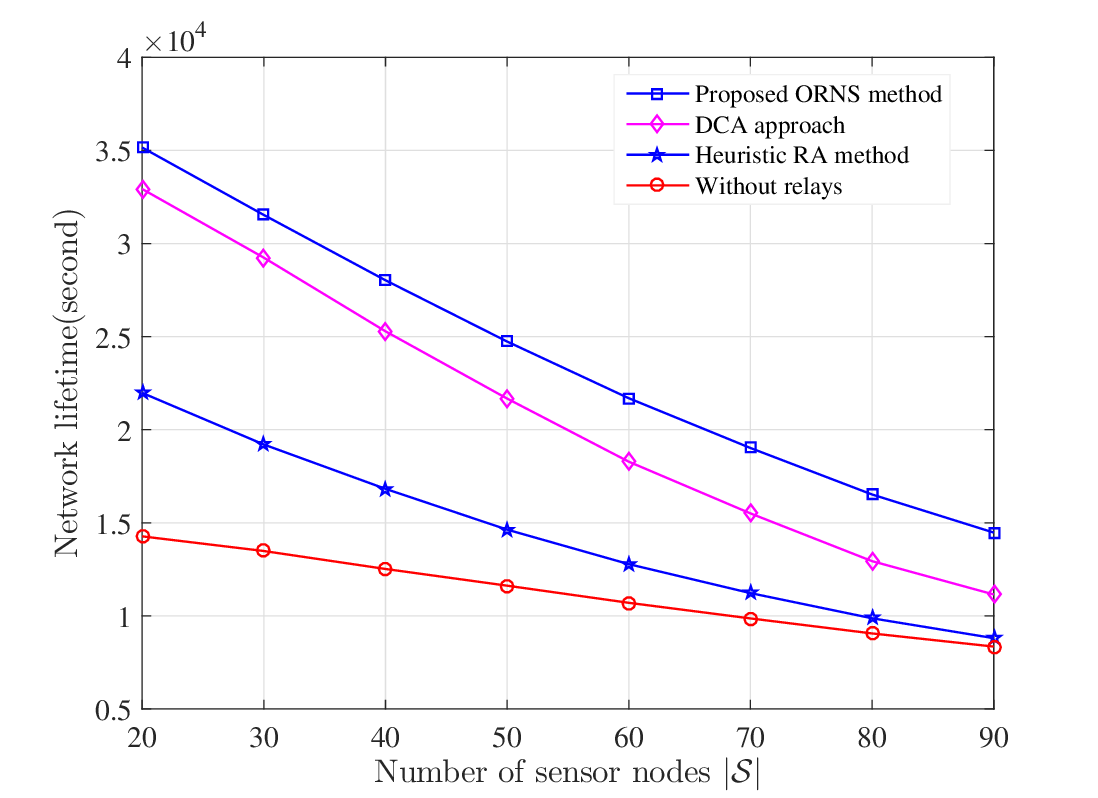}}
\caption{$\gamma_r=0.3, \mathrm{RF}=0.75$}
\label{Fig4-simulationB}
\end{subfigure}
\caption{Investigation of the performance in different network scales}
\label{Fig4-simulation}
\end{figure}

\subsection{Regulation of relay nodes}
In this section, we examine the regulation of relay nodes in the network. To do so, we consider Case A with $\rm RF=0.25$ and $\gamma_r=0.3$. We consider two network scales in this example, each with a different number of sensor nodes. For each scale, we present the results for two network deployments, as shown in Figs. \ref{Fig4-revised} and \ref{Fig5-revised}. In Fig. \ref{Fig4-revised}, we assume that there are 20 sensor nodes, while in Fig. \ref{Fig5-revised}, we assume that there are 40 sensor nodes.
\\
We observe that in a multi-hop UASN, a higher percentage of relay nodes are positioned near the SB due to the larger amount of data collected by nodes in that area. Comparing Fig. \ref{Fig4-revised} with Fig. \ref{Fig5-revised}, as the number of sensor nodes increases, the percentage of relay nodes near the SB also increases. This observation can be attributed to the fact that an increase in the number of sensors leads to a larger amount of information being gathered by nodes near the SB.
In the previous RA method, most relay nodes are not adjusted in depth because of their random location on the water's surface. Furthermore, previous line-segment relay node placement (LSRNP) approaches, such as RA and DCA, resulted in relay nodes being positioned too closely together or even directly on top of sensor nodes, as seen in Fig. \ref{Fig5a-revised} and Fig. \ref{Fig5d-revised}. In contrast, our proposed approach suggests a more sensible placement for relay nodes by considering a feasible search for the convex hull.

\subsection {Performance evaluation}

\begin{figure}[htbp]
\centering{\includegraphics[width = 0.95\columnwidth]{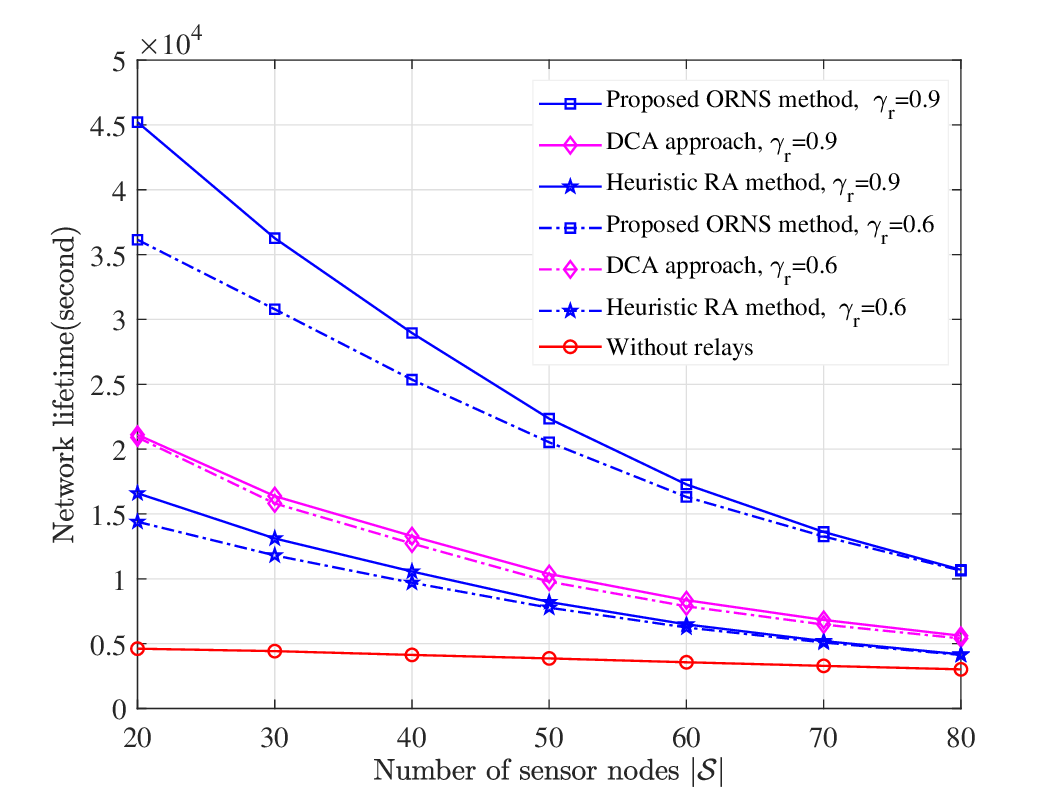}}
\caption{Effect of increasing the number of relay nodes on the performance}
\label{Fig5-simulation}
\end{figure}

In this part, we evaluate the performance of proposed RNP method and compare with RA and DCA approaches in terms of 
balanced energy consumption and network lifetime. 
To do so, we design some simulation cases and obtain all the numerical results from 50 deployments.
The first case assumes 40 sensor nodes, $\gamma_r=0.3$, and $\mathrm{RF}=0.25$, as shown in Case A in Table \ref{T3}. Figs. \ref{Fig2-simulationA}, \ref{Fig2-simulationB} and \ref{Fig2-simulationC} show the distribution of residual energy among sensor nodes when applying the RA, DCA, and ORNS methods, respectively, along with a scenario without relay nodes for comparison. It is observed that imbalanced energy consumption creates an energy hole in the network that limits the performance of UASNs. This, in turn, prevents the collected data from being forwarded to the SB. Employing relay nodes improves the energy hole issue in the network. Furthermore, the proposed ORNS method highlights the advantages of obtaining a more balanced energy distribution among nodes through optimal positioning of the relay nodes in addressing the energy hole issue. To illustrate further, Fig. \ref{Fig3-simulation} depicts the IEC factor for different RF values. It can be seen that the proposed method outperforms previous schemes in terms of energy balance, as evidenced by the smaller IEC values.
\par
We present the network lifetime as a function of the number of sensor nodes in the second case, with the parameters set according to cases A and B in Table \ref{T3}. The results for these cases are shown in Fig. \ref{Fig4-simulationA} and \ref{Fig4-simulationB}, respectively. Several important observations can be made based on these results. Firstly, it can be seen that the network lifetime decreases as the number of sensor nodes increases. This is due to the increased amount of data that needs to be relayed, leading to higher energy consumption. Secondly, the proposed RNP method outperforms RA and DCA by utilizing convex programming to determine the optimal location of the relay nodes. Lastly, without relay nodes, the transmission distance between the sensor nodes becomes long, resulting in a shorter network lifetime.
\subsection{Relay node selection design}
So far, we have assumed a fixed number of relay nodes in our scenario. However, in this section, we want to highlight the importance of selecting relay nodes in our proposed approach. To address this concern, we conducted a thorough analysis by comparing the network lifetime with varying numbers of sensor nodes and relay nodes.
The results, shown in Fig.\ref{Fig5-simulation} for cases C and D (where $\gamma_r=0.6$ and $\gamma_r=0.9$), clearly demonstrate that when the network has a high number of nodes, especially when the communication distance between sensor nodes is small, the need for additional relay nodes decreases. This is because the close proximity of sensor nodes allows for direct communication without relying intermediate relay nodes. In such cases, the presence or absence of extra relay nodes does not significantly affect the network lifetime.
\par
To gain further insight, consider Fig. \ref{Fig6-simulation} where we assume there are 80 sensor nodes and plot the positions of relay nodes in the network. It can be observed that the relay nodes are placed very closely together (DCA) or have specific positions beyond a certain number (proposed approach). In this situation, our relay node selection design, which strategically places five relay nodes, proves effective in achieving optimal network performance while minimizing the overall number of required relay nodes.
On the other hand, when relay node selection is not considered, there are no criteria to limit the placement of relay nodes. In conclusion, by carefully selecting these relay nodes based on our proposed criteria, we were able to achieve significant improvements in both maximizing network lifetime and minimizing the number of required relay nodes.

\begin{figure}[!t]
\centering{\includegraphics[scale=.47]{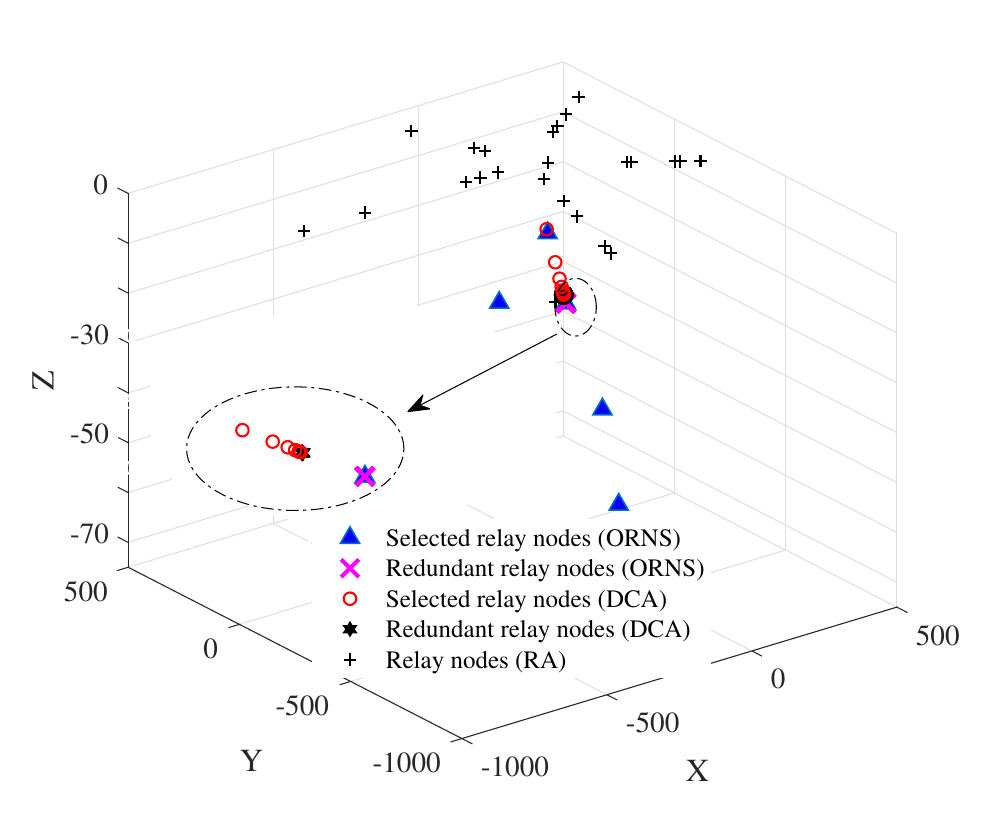}}
\caption{Distribution of relay nodes when $|\mathcal S|=80$ and $\mathrm{RF}=0.25$}
\label{Fig6-simulation}
\end{figure}
\section{Conclusion}
This paper aimed to address the joint optimization of maximizing network lifetime and minimizing relay node deployment in relay-assisted UASNs. To achieve a Pareto optimal solution, a multi-objective lexicographic method was employed in which the primary goal was to optimize the network lifetime in the RNP followed by the reduction of the active relay nodes. To accomplish this, a two-step process was employed. First, the ORNS algorithm was utilized to formulate the position of each relay node as a non-convex programming problem. This was then converted into an equivalent convex programming problem using a novel transformation and epigraph form scheme. Subsequently, a relay selection procedure utilizing a mixed-integer convex programming model was applied to minimize the number of active relay nodes. The proposed approach proved to be more efficient in terms of network lifetime than the existing models (RA and DCA).
\appendices
\section{Equivalent convex form of the (\ref{p32})}\label{A1}
Here, we present a transformation of a non-convex optimization problem with a convex objective function and convex (and/or linear) and DC constraints into an equivalent convex optimization problem. The non-convex problem can be expressed as 
\begin{subequations}
\begin{align}
& \mathbf{x}={\rm arg\min} f_0( \mathbf x)\\\intertext{s.t.} 
&f_i(\mathbf x)\leq0,\quad \quad i=1,2, \ldots, I,\\
&h_p(\mathbf x)=0,\quad \quad p=1,2, \ldots, P,\label{DC1}
\end{align}
\label{LCDC}
\end{subequations}
where $\mathbf x\in \mathbb{R}^n$ , $f_0, ..., f_I: \mathbb{R}^n\rightarrow\mathbb{R}$ 
are convex functions and $h_1, \ldots, h_M: \mathbb{R}^n\rightarrow\mathbb{R}$ are DC functions. We take advantage of the property of DC constraints which can be expressed as the difference of convex functions, and introduce a new variable $\bold{t}=[t_1,t_2, \ldots ,t_M]^T \in \mathbb{R}^M$. This yields a convex optimization problem, 
\begin{subequations}\label{p49}
\begin{align}
&( \mathbf x,\bold {t})=\rm arg\min f_0( \mathbf x)\\\intertext{s.t.} 
&f_i( \mathbf x)\leq0,\quad \quad i=1,2, \ldots, I,\\
&\psi_m( \mathbf x) -t_m=0, \quad \quad m=1,2, \ldots, M,\label{cvx1}\\
&\phi_m ( \mathbf x)-t_m=0, \quad \quad m=1,2,\ldots, M.\label{cvx2}
\end{align}
\end{subequations}
Since (\ref{cvx1}) and (\ref{cvx2}) are both convex, this problem is the desired convex programming equivalent of the non-convex one.
{
\bibliographystyle{IEEEtranTCOM}
\bibliography{mybibfile}%

\begin{thebibliography}{10}
\baselineskip 12pt
\providecommand{\url}[1]{#1}
\csname url@samestyle\endcsname
\providecommand{\newblock}{\relax}
\providecommand{\bibinfo}[2]{#2}
\providecommand{\BIBentrySTDinterwordspacing}{\spaceskip=0pt\relax}
\providecommand{\BIBentryALTinterwordstretchfactor}{4}
\providecommand{\BIBentryALTinterwordspacing}{\spaceskip=\fontdimen2\font plus
\BIBentryALTinterwordstretchfactor\fontdimen3\font minus \fontdimen4\font\relax}
\providecommand{\BIBforeignlanguage}[2]{{%
\expandafter\ifx\csname l@#1\endcsname\relax
\typeout{** WARNING: IEEEtran.bst: No hyphenation pattern has been}%
\typeout{** loaded for the language `#1'. Using the pattern for}%
\typeout{** the default language instead.}%
\else
\language=\csname l@#1\endcsname
\fi
#2}}
\providecommand{\BIBdecl}{\relax}
\BIBdecl

\bibitem{2020survey}
S.~Fattah \emph{et~al.}, ``A survey on underwater wireless sensor networks: Requirements, taxonomy, recent advances, and open research challenges,'' \emph{Sensors}, vol.~20, no.~18, p. 5393, 2020.

\bibitem{wadaa2005training}
A.~Wadaa \emph{et~al.}, ``Training a wireless sensor network,'' \emph{Mobile Networks and Applications}, vol.~10, no. 1-2, pp. 151--168, 2005.

\bibitem{wang2015energy}
K.~Wang \emph{et~al.}, ``An energy-efficient reliable data transmission scheme for complex environmental monitoring in underwater acoustic sensor networks,'' \emph{IEEE Sensors Journal}, vol.~16, no.~11, pp. 4051--4062, 2015.

\bibitem{zhuo2020auv}
X.~Zhuo \emph{et~al.}, ``Auv-aided energy-efficient data collection in underwater acoustic sensor networks,'' \emph{IEEE Internet of Things Journal}, 2020.

\bibitem{yan2018energy}
J.~Yan \emph{et~al.}, ``Energy-efficient data collection over auv-assisted underwater acoustic sensor network,'' \emph{IEEE Systems Journal}, vol.~12, no.~4, pp. 3519--3530, 2018.

\bibitem{javaid2015efficient}
N.~Javaid \emph{et~al.}, ``An efficient data-gathering routing protocol for underwater wireless sensor networks,'' \emph{Sensors}, vol.~15, no.~11, pp. 29\,149--29\,181, 2015.

\bibitem{power_control}
H.~U. Yildiz, V.~C. Gungor, and B.~Tavli, ``Packet size optimization for lifetime maximization in underwater acoustic sensor networks,'' \emph{IEEE Transactions on Industrial Informatics}, vol.~15, no.~2, pp. 719--729, 2018.

\bibitem{power_control1}
P.~Anjangi and M.~Chitre, ``Scheduling algorithm with transmission power control for random underwater acoustic networks,'' in \emph{OCEANS 2015-Genova}.\hskip 1em plus 0.5em minus 0.4em\relax IEEE, 2015, pp. 1--8.

\bibitem{mac1}
A.~Roy and N.~Sarma, ``Rpcp-mac: Receiver preambling with channel polling mac protocol for underwater wireless sensor networks,'' \emph{International Journal of Communication Systems}, vol.~33, no.~9, p. e4383, 2020.

\bibitem{mac2}
H.~Wang \emph{et~al.}, ``An effective scheduling algorithm for coverage control in underwater acoustic sensor network,'' \emph{Sensors}, vol.~18, no.~8, p. 2512, 2018.

\bibitem{EH1}
H.~E. Erdem, H.~U. Yildiz, and V.~C. Gungor, ``On the lifetime of compressive sensing based energy harvesting in underwater sensor networks,'' \emph{IEEE Sensors Journal}, vol.~19, no.~12, pp. 4680--4687, 2019.

\bibitem{EH2}
L.~Jing \emph{et~al.}, ``Energy management and power allocation for underwater acoustic sensor network,'' \emph{IEEE Sensors Journal}, vol.~17, no.~19, pp. 6451--6462, 2017.

\bibitem{ahmed2016optimized}
T.~Ahmed \emph{et~al.}, ``Optimized depth-based routing protocol for underwater wireless sensor networks,'' in \emph{Open Source Systems \& Technologies (ICOSST), 2016 International Conference on}.\hskip 1em plus 0.5em minus 0.4em\relax IEEE, 2016, pp. 147--150.

\bibitem{wadud2019energy}
Z.~Wadud \emph{et~al.}, ``An energy balanced efficient and reliable routing protocol for underwater wireless sensor networks,'' \emph{IEEE Access}, vol.~7, pp. 175\,980--175\,999, 2019.

\bibitem{felemban2015underwater}
E.~Felemban \emph{et~al.}, ``Underwater sensor network applications: A comprehensive survey,'' \emph{International Journal of Distributed Sensor Networks}, vol.~11, no.~11, p. 896832, 2015.

\bibitem{das2017enhancement}
B.~Das, E.~S. Mishra, and S.~K. Sethi, ``Enhancement of lifetime of acoustic sensor using pegasis algorithm in uasn,'' \emph{International Journal of Electronics, Electrical and Computational System (Mc Graw Hill Publication)}, vol.~6, no.~9, pp. 534--545, 2017.

\bibitem{liu2017optimal}
L.~Liu \emph{et~al.}, ``Optimal relay node placement and flow allocation in underwater acoustic sensor networks,'' \emph{IEEE Transactions on Communications}, vol.~65, no.~5, pp. 2141--2152, 2017.

\bibitem{mohammadi2018new}
Z.~Mohammadi \emph{et~al.}, ``A new optimization algorithm for relay node setting in underwater acoustic sensor networks,'' in \emph{2018 3rd Conference on Swarm Intelligence and Evolutionary Computation (CSIEC)}.\hskip 1em plus 0.5em minus 0.4em\relax IEEE, 2018, pp. 1--5.

\bibitem{mohammadi2020increasing}
------, ``Increasing the lifetime of underwater acoustic sensor networks: Difference convex approach,'' \emph{IEEE Systems Journal}, 2020.

\bibitem{mohammadi2022modified}
Z.~Mohammadi, ``Modified relay node placement in dense 3d underwater acoustic sensor networks,'' in \emph{2022 9th Iranian Joint Congress on Fuzzy and Intelligent Systems (CFIS)}.\hskip 1em plus 0.5em minus 0.4em\relax IEEE, 2022, pp. 1--4.

\bibitem{akyildiz2005underwater}
I.~F. Akyildiz, D.~Pompili, and T.~Melodia, ``Underwater acoustic sensor networks: research challenges,'' \emph{Ad hoc networks}, vol.~3, no.~3, pp. 257--279, 2005.

\bibitem{urick1975}
R.~J. Urick, ``Principles of underwater sound-2,'' 1975.

\bibitem{wang2016energy}
K.~Wang \emph{et~al.}, ``An energy-efficient reliable data transmission scheme for complex environmental monitoring in underwater acoustic sensor networks,'' \emph{IEEE Sensors Journal}, vol.~16, no.~11, pp. 4051--4062, 2016.

\bibitem{thorp1967analytic}
W.~H. Thorp, ``Analytic description of the low-frequency attenuation coefficient,'' \emph{The Journal of the Acoustical Society of America}, vol.~42, no.~1, pp. 270--270, 1967.

\bibitem{cao2015balance}
J.~Cao, J.~Dou, and S.~Dong, ``Balance transmission mechanism in underwater acoustic sensor networks,'' \emph{International Journal of Distributed Sensor Networks}, vol.~11, no.~3, p. 429340, 2015.

\bibitem{akbar2016efficient}
M.~Akbar \emph{et~al.}, ``Efficient data gathering in 3d linear underwater wireless sensor networks using sink mobility,'' \emph{Sensors}, vol.~16, no.~3, p. 404, 2016.

\bibitem{boyd2004convex}
S.~Boyd and L.~Vandenberghe, \emph{Convex optimization}.\hskip 1em plus 0.5em minus 0.4em\relax Cambridge university press, 2004.

\bibitem{tuy1998convex}
H.~Tuy \emph{et~al.}, \emph{Convex analysis and global optimization}.\hskip 1em plus 0.5em minus 0.4em\relax Springer, 1998.

\bibitem{silverman1989essential}
R.~A. Silverman, \emph{Essential calculus with applications}.\hskip 1em plus 0.5em minus 0.4em\relax Courier Corporation, 1989, vol.~75.

\bibitem{soleimanpour2018low}
M.~Soleimanpour-moghadam \emph{et~al.}, ``Low complexity green cooperative cognitive radio network with superior performance,'' \emph{IEEE Systems Journal}, no.~99, pp. 1--12, 2018.

\bibitem{stepapprox}
F.~Rinaldi, F.~Schoen, and M.~Sciandrone, ``Concave programming for minimizing the zero-norm over polyhedral sets,'' \emph{Computational Optimization and Applications}, vol.~46, no.~3, pp. 467--486, 2010.

\bibitem{alsalih2010placement}
W.~Alsalih, H.~Hassanein, and S.~Akl, ``Placement of multiple mobile data collectors in wireless sensor networks,'' \emph{Ad Hoc Networks}, vol.~8, no.~4, pp. 378--390, 2010.

\end{thebibliography}
}

\end{document}